\newtheorem{theorem}{\bf \emph{\underline{Theorem}}}
\newtheorem{lemma}{\bf \emph{\underline{Lemma}}}
\newtheorem{example}{\bf \emph{\underline{Example}}}
\newtheorem{remark}{\bf \emph{\underline{Remark}}}
\def\l{\left}
\def\r{\right}
\def\({\left(}
\def\){\right)}
\def\b0{{\mathbf{0}}}
\newcommand{\tr}{\mathrm{tr}}
\newcommand{\diag}{\mathrm{diag}}
\newcommand{\nn}{\nonumber}
\begin{document}
\title{Target Sensing with Intelligent Reflecting Surface: Architecture and Performance}
\author{{Xiaodan Shao,~\IEEEmembership{Member,~IEEE}, Changsheng You, \IEEEmembership{Member, IEEE},  Wenyan Ma, \IEEEmembership{Student Member, IEEE}, Xiaoming Chen, \IEEEmembership{Senior Member, IEEE},  and  Rui Zhang, \IEEEmembership{Fellow, IEEE}}
	 \vspace{-18pt}
\thanks{The work of X. Chen is supported by the Natural Science Foundation of China under grant U21A20443. The work of R. Zhang was supported by Ministry of Education, Singapore under Award T2EP50120-0024 and by Advanced Research and Technology Innovation Centre (ARTIC) of National University of Singapore under Research Grant R-261-518-005-720. \emph{(Corresponding author: Changsheng You)}.}
	 \vspace{-18pt}
	\thanks{X. Shao is with the College of Information Science and Electronic Engineering, Zhejiang University, Hangzhou 310016, China, and
he is also a visiting scholar at National University of Singapore, Singapore 117583, (Email: shaoxiaodan@zju.edu.cn).}

\thanks{C. You was with the Department of Electrical and Computer Engineering, National University of Singapore, Singapore 117583, he is now with the Department of Electrical and Electronic Engineering, Southern University of Science and Technology (SUSTech), Shenzhen 518055, China, (Email: {youcs@sustech.edu.cn}).}

\thanks{X. Chen is with the College of Information Science and Electronic Engineering, Zhejiang University, Hangzhou 310016, China, (Email: chen\_xiaoming@zju.edu.cn).}

\thanks{W. Ma is with the Department of Electrical and Computer Engineering, National University of Singapore, Singapore 117583, (Email: e0787961@u.nus.edu).}

\thanks{R. Zhang is with the Department of Electrical and Computer Engineering, National University of Singapore, Singapore 117583, (Email: elezhang@nus.edu.sg).}
}
\maketitle
\begin{abstract}
Intelligent reflecting surface (IRS) has emerged as a promising technology to reconfigure the radio propagation environment by dynamically controlling wireless signal's amplitude and/or phase via a large number of reflecting elements. In contrast to the vast literature on studying IRS's performance gains in wireless communications, we study in this paper a new application of IRS for sensing/localizing targets in wireless networks. Specifically, we propose a new \emph{self-sensing IRS} architecture where the IRS controller is capable of transmitting probing signals that are not only directly reflected by the target (referred to as the direct echo link), but also consecutively reflected by the IRS and then the target (referred to as the IRS-reflected echo link). Moreover, dedicated sensors are installed at the IRS for receiving both the direct and IRS-reflected echo signals from the target, such that the IRS can sense the direction of its nearby target by applying a customized multiple signal classification (MUSIC) algorithm. However, since the angle
estimation mean square error (MSE) by the MUSIC algorithm is intractable, we propose to
optimize the IRS passive reflection for maximizing the average echo signals' total power at the IRS sensors and derive the resultant
Cramer-Rao bound (CRB) of the angle estimation MSE. Last, numerical results are presented to show the effectiveness of the proposed new IRS sensing architecture and algorithm, as compared to other benchmark sensing systems/algorithms.

%The target localization performance of existing radar systems is usually constrained by the

\end{abstract}

%\newpage
\begin{IEEEkeywords}
Intelligent reflecting surface (IRS), passive reflection, wireless sensing and localization.
\end{IEEEkeywords}

\IEEEpeerreviewmaketitle

\section{Introduction}
The future sixth-generation (6G) wireless systems need to enable the emerging location-aware applications such as virtual reality, robot navigation, autonomous driving, and so on. These applications impose more stringent requirements on both the communication and sensing performance of today's wireless systems, such as ultra-high data rate, ubiquitous and seamless coverage, extremely-high reliability and ultra-low latency, as well as high-precision/resolution sensing \cite{6G1,6G2}.
With the development of massive multi-input multi-output (MIMO) and millimeter wave (mmWave) communication technologies, the base station (BS) in communication systems has the capability to achieve high-resolution sensing in the angular domain. This thus gives rise to the emerging research area of joint communication and sensing, which advocates to share the hardware, platform, and radio resource in the design and use of communication and sensing systems to achieve their substantially enhanced performance with reduced cost  \cite{cui2021integrating,liu2020joint,ma2020joint,feng2020joint}.

%mono-static BS sensing
%bi-static mobile sensing

Specifically, in the conventional mono-static BS sensing system,
the transmit and receive antennas are co-located at the BS or the BS antenna array is exploited for both transmitting and receiving radar probing signals in a full-duplex (FD) manner \cite{mono}.
%the BS antenna array is separated into two sets or exploited as a whole to transmit and receive radar probing signals.
%s such, the mono-static sensing system
%the BS antenna array is divided into two sets for transmitting and receiving radar probing signals, respectively,
%the BS antenna array is exploited for both transmitting and receiving radar probing signals. Such a so-called mono-static radar sensing system
This mono-static sensing system generally entails a small number of  angle-and-distance parameters to be estimated, due to the same direction-of-angle (DOA) and distance over the BS$\to$target and target$\to$BS links.
However, the BS receive array may suffer non-negligible interference from its transmit array. Moreover, the localization performance of the BS is deteriorated with the increase of the target distance, due to the severe product-distance round-trip path-loss as well as random blockages between the BS and targets \cite{liangradar}.
Alternatively, in the bi-static mobile sensing system, the transmit and receive arrays are placed in different sites (e.g., BSs and/or mobile devices). For example, a  BS transmits probing signals, while a multi-antenna mobile device receives the echo signals reflected by the target for its localization. As such, the bi-static sensing system has the potential to improve the sensing performance over its mono-static sensing counterpart, due to the smaller product-distance path-loss and less susceptibility to interference, provided that the mobile device is near the target and far away from the BS. Nevertheless, the bi-static sensing system generally requires more parameters to be estimated for target localization, due to the different DOAs and distances over the BS$\to$target and target$\to$device links.
\begin{figure*}[t]
  \centering
\includegraphics [width=180mm] {./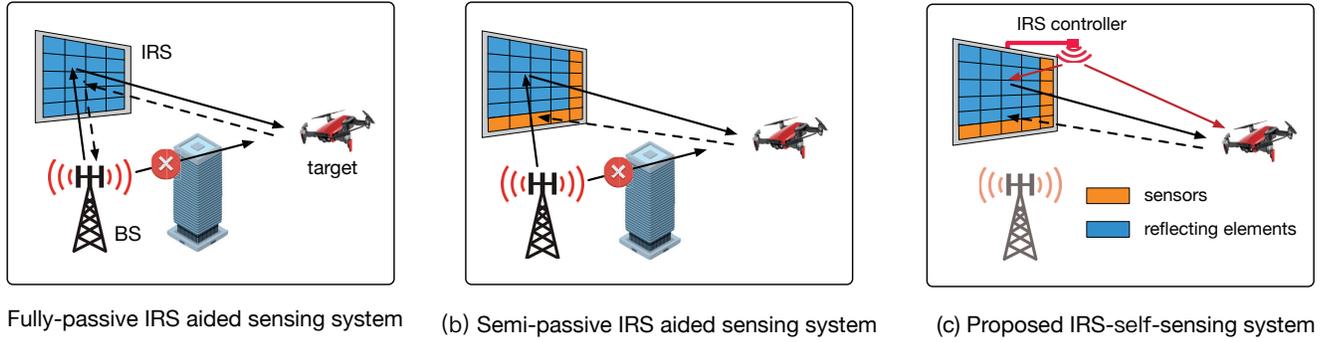}
\caption{Different IRS-aided sensing systems.}
\label{syscomp}
\end{figure*}

To meet the high demands of future 6G systems in terms of both communication and sensing, the concept of smart radio environment has been recently proposed \cite{liaskos2018new}, which essentially leverages the digitally-controlled metasurface, termed \emph{intelligent reflecting surface} (IRS) (or reconfigurable intelligent surface (RIS) equivalently), to reconfigure the wireless propagation environment in favor of both wireless communication and radar sensing \cite{wu2021intelligent,basar19_survey,irs1,irs2,irs3}. In particular,
by dynamically controlling phase shifts via a large number low-cost reflecting elements, IRS is able to realize many desired functions, such as bypassing environment obstacles, reshaping channel realizations/distributions, and increasing the multi-antenna/MIMO channel rank.
These functions have been exploited and thoroughly investigated in the existing literature for improving the communication performance in various wireless systems, by developing efficient designs of IRS passive beamforming/reflection, channel estimation, and placement (see, e.g., \cite{JR:wu2018IRS,huang2018largeRIS,yu2020robust,you2019progressive,zhang2021intelligent,you2020deploy,you2021enabling}).
On the other hand, in terms of radar sensing, IRS is able to establish a line-of-sight (LoS) link with the target in its vicinity, which is particularly helpful when the direct sensing link between the BS and target is blocked. Moreover, in addition to the direct BS-target link, IRS provides an extra LoS reflected link to sense the target from a different angle, thus potentially enhancing the sensing performance.
The above benefits have motivated active research recently on designing  different IRS systems/algorithms for improving the sensing performance.
%, which remains largely uncharted.

Among others, a fully-passive IRS aided sensing system was studied in \cite{sen1,sen2,sen3,sen4,sen5,sen6}, where an IRS without any transmit/receive radio-frequency (RF) chains is employed to help estimate the direction of a nearby target in the challenging scenario where the direct link between the BS and target is blocked. In this case, the BS sends probing signals, which are reflected by the IRS to illuminate the target with dynamically tuned beam directions. After the IRS-reflected beam hits the target, the echo signal is consecutively reflected by the target and IRS (again), and finally received at the BS, which estimates the direction of the target with respect to (w.r.t.) the IRS. However, this approach may suffer severe path-loss over the multiple (three-hop) signal reflections, i.e., BS$\to$IRS$\to$target$\to$IRS$\to$BS as shown in Fig.~\ref{syscomp}(a), and thus degraded  DOA estimation performance. To tackle this issue, an alternative IRS-aided sensing method is to utilize the semi-passive IRS, where the IRS is equipped with dedicated (low-cost) sensors to receive signals for facilitating channel estimation in communications as well as target localization \cite{semi1,semi2,semi3,wu2021intelligent}. Specifically, one efficient way is letting the IRS reflect the probing signals sent by the BS to the target, while the IRS sensors receive the echo signals reflected by the target for estimating its angle w.r.t. the IRS.
Various approaches can be applied to estimate the target angle w.r.t. the IRS, such as Capon's minimum variance method \cite{cano}, maximum likelihood estimation (MLE) \cite{mle}, and the
subspace-based algorithms (e.g., multiple signal classification (MUSIC) algorithm \cite{musica} and estimation of signal parameters via rotational invariance techniques (ESPRIT) algorithm \cite{esp}).
Nevertheless, the angle estimation performance is still constrained by the sensors' received signal-to-noise ratio (SNR), due to the high path-loss of the two-hop signal reflections, i.e., BS$\to$IRS$\to$target$\to$sensors,
as illustrated in Fig.~\ref{syscomp}(b).

To improve the sensing performance of the aforementioned IRS aided sensing systems, we propose in this paper a new \emph{IRS-self-sensing} system as shown in Fig.~\ref{syscomp}(c), where the IRS reflecting elements are illuminated by the probing signals from the IRS controller\footnote{IRS controller is attached to each IRS for controlling its signal reflection as well as communicating with associated BS/mobile devices for exchanging control signals in IRS-aided communications; thus, the IRS controller needs to possess both transmit and receive RF modules and it can also send probing signals for target sensing as considered in this paper \cite{zhengIRS, qingtoward}.} in short distance and form reflected beams to sense the locations of its nearby targets. Moreover, receiving sensors are installed at the IRS to receive the echo signals from two types of links: 1) the two-hop IRS-reflected echo link,
% one is consecutively reflected over the IRS reflecting link,
  i.e., IRS controller$\to$IRS elements$\to$target$\to$IRS sensors; and 2) the one-hop direct echo link,
%   the other is directly reflected from the target over the direct link,
   i.e., IRS controller$\to$target$\to$IRS sensors. This thus greatly reduces the product-distance path-loss in the IRS-reflected echo link as compared to both the conventional fully- and semi-passive IRS aided sensing systems (see Figs. 1(a) and 1(b)), since the probing signals are transmitted from the IRS controller (instead of the BS) that is placed very close to the IRS reflecting elements in practice. Moreover, the MUSIC algorithm can be applied to estimate the target DOA with better resolution based on the received echo signals from both links at the IRS sensors. On the other hand, from a network implementation perspective, the self-sensing IRS does not involve the BS or any mobile device in the sensing process, hence greatly reducing the communication and control overhead required for target sensing. Thus, it is beneficial to deploy such low-cost self-sensing IRSs in the network to locate their nearby targets in a distributed manner, which can significantly alleviate the sensing task at the BS/device side in wireless networks. In the following, we summarize the main contributions of this paper.

First,
we apply a customized MUSIC algorithm for our considered IRS-self-sensing system to estimate the DOA of a target near the IRS, based on the received echo signals over both the direct and IRS-reflected echo links. Although the DOA estimation MSE by the MUSIC algorithm is intractable, we formulate an optimization problem to optimize the IRS passive reflection for maximizing the average echo signals' total power at IRS sensors. We show that in the presence of both the direct and IRS-reflected echo links, the discrete Fourier transform (DFT) based IRS passive reflection is optimal, which generates an omnidirectional beampattern in the angular domain for target sensing. Second, we show that the received signal power due to the IRS-reflected echo link linearly increases with the number of IRS reflecting elements, and it is larger than that due to the direct echo link  when the number of IRS reflecting elements is sufficiently large. Moreover, we analytically show that it is beneficial to employ the IRS controller for sending probing signals as compared to the benchmark scheme by using instead a nearby mobile device. Furthermore, the Cramer-Rao bound (CRB) of the target DOA estimation MSE is analytically derived. Last, numerical results are presented to validate the performance gain of the proposed IRS-self-sensing system as compared to various benchmark  sensing systems/algorithms.

The rest of this paper is organized as follows.  We first present in Section II the proposed new IRS-self-sensing system and its sensing protocol. In Section III, we present the customized DOA estimation algorithm and the optimal IRS passive reflection for target sensing. Subsequently, we analytical show the advantages of the proposed new IRS architecture and the resultant CRB for the DOA estimation in Section IV. Numerical results are provided in Section V, followed by the conclusions in Section VI.

\emph{Notations}: $\left \|\cdot\right \|_2$ denotes the Euclidean norm, $\text{vec}(\cdot)$ denotes the operation that stacks the columns of a matrix, $\mathrm{diag}({\bf x})$ denotes a diagonal matrix with the diagonal entries specified by vector ${\bf x}$. $\Re(x)$ and $\Im(x)$ denote the real and imaginary parts of a complex number $x$. For a complex symbol, $(\cdot)^{\dagger}$, $(\cdot)^H$, and $(\cdot)^T$ denote its complex conjugate,  conjugate transpose, and transpose, respectively. For a square matrix $\mathbf{S}$, $\mathbf{S}\succeq 0$ means that $\mathbf{S}$ is positive semi-definite, and $\tr(\mathbf{S})$ denotes its trace. The Hadamard product is denoted by $\odot$, and  the Kronecker product is denoted by $\otimes$. The distribution of a circularly symmetric complex Gaussian (CSCG) random variable with mean $u$ and covariance $\sigma^2$ is denoted by $\mathcal{CN}(u,\sigma^2)$, $\backsim$ stands for  ``distributed as", $\mathbb{E}(\cdot)$ denotes the statistical expectation, and $\dot{\mathbf{a}}(x)$ denotes the first-order partial derivative w.r.t. $x$.
%    $\mathbf{x}\in \mathbb{C}^{n}$ is said to follow a vector-valued Gaussian  distribution with mean  of the form $\mathcal{CN}(\mathbf{x}|\mathbf{u},\boldsymbol{\Sigma})$.

\section{System Model}
\begin{figure}[t]
\setlength{\abovecaptionskip}{-0.cm}
\setlength{\belowcaptionskip}{0.cm}
  \centering
\includegraphics [width=80mm] {./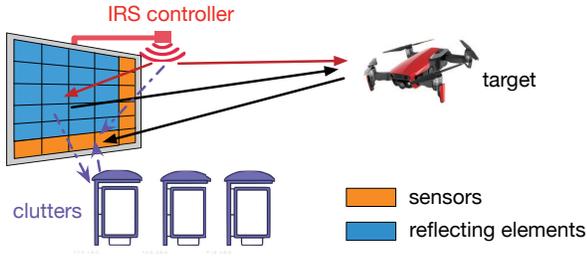}
\caption{An IRS-self-sensing system.}
\label{sysmodel}
\end{figure}

We consider the proposed IRS-self-sensing  system as shown in Fig.~\ref{sysmodel}, where an IRS is deployed in the network to sense the location of an unknown target in the presence of $L$ clutters under the assumption that the IRS location is known \emph{a priori}\footnote{The proposed algorithm for the single-target localization can be readily extended to the multi-target case by using the MUSIC algorithm if the targets are sufficiently far-apart, while more advanced techniques (e.g., spatial smoothing \cite{smooth}) need to be developed to resolve the targets' DOAs if they are closely located in the angular domain.}. The IRS consists of $N_0=N_{\rm h}\times N_{\rm v}$ reflecting elements, where $N_{\rm h}$ and $N_{\rm v}$ denote respectively the number of horizontal and vertical elements with the inter-element (horizontal and vertical) distance denoted by $d_{\rm I}$.
%, while the results of this paper can be extended to  the case of the uniform planar array (UPA) IRS by considering its three-dimensional passive beamforming and localization.
The reflecting elements are illuminated by the probing signals from a single omnidirectional-antenna IRS controller that is capable of transmitting signals as well as dynamically controlling the IRS passive reflection for localizing the target. Moreover, to estimate both the azimuth and elevation angles of the target, $M_0$ (assumed to be an even number) sensors are installed adjacent to the IRS reflecting elements to receive the echoed signals by the target as shown in Fig.~\ref{sysmodel}, where $M_{\rm h}=M_{\rm v}=(M_0+1)/2$ sensors are placed along the horizontal and vertical axes, respectively, with the inter-sensor distance given by $d_{\rm s}$. For simplicity, we focus on the target's azimuth angle estimation  in this paper based on the received signals on the $M_{\rm h}$ horizontal sensors, while the results can be extended to estimate its elevation angle as well.
\subsection{Radar Channel Model}
We assume a narrow-band sensing system, where the IRS controller consecutively sends probing signals over $T$ snapshots, within which all the associated channels are assumed to remain static. At each snapshot, the sensors receive the signals transmitted from the IRS controller and that reflected by the target as well as clutters, with and without the IRS reflection. Specifically, the received signals reflected by the target undergo two types of links: 1) the two-hop IRS-reflected echo link, i.e., IRS controller$\to$IRS elements$\to$target$\to$IRS sensors; and 2) the single-hop direct echo  link, i.e., IRS controller$\to$target$\to$IRS sensors. These two links are modeled as follows in detail, respectively.

First, consider the IRS-reflected echo link associated with the target. Let ${\bf u}(\bar{\vartheta}, \bar{N})$ denote the steering vector function, which is defined as
\begin{align}\label{array}
{\bf u}(\bar{\vartheta},  \bar{N})\triangleq [e^{\frac{-\jmath (\bar{N}-1)\pi\bar{\vartheta}}{2}},  e^{\frac{-\jmath (\bar{N}-3)\pi\bar{\vartheta}}{2}}, \dots, e^{\frac{\jmath (\bar{N}-1)\pi\bar{\vartheta}}{2}}]^T,
\end{align}
where $\bar{N}$ (assumed to be an even number) denotes the uniform linear array (ULA) size and $\bar{\vartheta}$ denotes the constant phase difference between the observations at two adjacent elements/sensors. Then, the IRS controller$\to$IRS elements channel, denoted by
%As the IRS controller is placed very close to the reflecting elements, the controller$\to$elements channel, denoted by
${\bf h}_{{\rm CI}}\in\mathbb{C}^{N_0\times 1}$,  can be modeled as follows based on the far-field LoS channel model.\footnote{We assume the element-wise channel is in the far field from the IRS controller to IRS elements, while the results can be extended to the case of near-field channel condition when the IRS controller is placed extremely close to IRS elements \cite{near1,near2}.}
% can be modeled based on the near-field line-of-sight (LoS) channel model. Specifically, let $d_{{\rm CI}, n}$ denote the distance between the IRS controller and the $n$-th IRS reflecting element with $n\in\mathcal{N}\triangleq\{1,\cdots, N\}$. Then the $n$-th controller$\to$elements channel coefficient can be modeled as
\begin{equation}
{\bf h}_{{\rm CI}}=
% \frac{\sqrt{\beta_0} }{d_{{\rm CI}}}
%\sqrt{\alpha_{\rm CI}} e^{\frac{\jmath 2\pi d_{\rm CI}} {\lambda}}
{\alpha}_{{\rm CI}} {\bf a}(\theta_{\rm CI, h}, \theta_{\rm CI, v}),
% {\bf u}(\theta_{\rm CI, h}, N_{\rm h}) \otimes {\bf u}(\theta_{\rm CI, v}, N_{\rm v}),
 \label{Eq:CI model}
\end{equation}
where ${\alpha}_{{\rm CI}}=\sqrt{\frac{\lambda^2}{16\pi^2 d_{\rm CI}^2}}e^{\frac{\jmath 2\pi d_{\rm CI}} {\lambda}}$ denotes the complex-valued path gain of the IRS controller$\to$IRS elements channel, $\lambda$ denotes the carrier wavelength, $d_{{\rm CI}}$ denotes the distance between the IRS controller and IRS central element,
%$\alpha_{\rm CI}=\frac{\lambda^2}{16\pi^2 d_{\rm CI}^2 }$ represents the controller$\to$elements channel power gain with $\lambda$ denoting the carrier wavelength,
and $\theta_{\rm CI, h}\in[-\pi/2,\pi/2]$ and $\theta_{\rm CI, v}\in[-\pi/2,\pi/2]$ denote respectively the \emph{physical} azimuth and vertical angles-of-arrival (AoA) at the IRS. Therein, ${\bf a}(\cdot)$ denotes the response vector function of the IRS, which is defined as
\begin{align}
 {\bf a}(\theta_{\rm CI, h}, \theta_{\rm CI, v})= {\bf u}(\phi_{\rm CI, h}, N_{\rm h}) \otimes {\bf u}(\phi_{\rm CI, v}, N_{\rm v}),
\end{align}
where $\phi_{\rm CI, h}\triangleq\frac{2d_{\rm I}}{\lambda}\sin(\theta_{\rm CI, h})\sin(\theta_{\rm CI, v})\in[-\frac{2d_{\rm I}}{\lambda},\frac{2d_{\rm I}}{\lambda}]$ and $\phi_{\rm CI, v}\triangleq\frac{2d_{\rm I}}{\lambda}\cos(\theta_{\rm CI, v})\in[-\frac{2d_{\rm I}}{\lambda},\frac{2d_{\rm I}}{\lambda}]$ are referred to as the horizontal and vertical spatial directions, respectively. Under the far-field condition, the AoAs from the target to IRS sensors can be assumed to be the same as the angles-of-departure (AoDs) from the IRS elements to target. As such, the echo channel of the  IRS elements$\to$target$\to$IRS (horizontal) sensors link can be modeled as
% matrix (or the elements $\to$ target $\to$ sensors channel) for a target in the direction $(\theta_{\rm IT, h}, \theta_{\rm IT, h})$
%is given by
\vspace{-0.39em}
\begin{align}
{\bf H}_{{\rm ITS}} = \alpha_{\rm r}{\bf b}(\theta_{\rm IT, h}){\bf a}^T(\theta_{\rm IT, h}, \theta_{\rm IT, v}),
\end{align}
where $\theta_{\rm IT, h}$ and $\theta_{\rm IT, v}$ denote the azimuth and vertical AoDs from the IRS to target, respectively, ${\bf b}(\theta_{\rm IT, h})={\bf u}(\frac{2d_{\rm s}}{\lambda}\sin(\theta_{\rm IT, h}), M_{\rm h})\in\mathbb{C}^{M_{\rm h}\times 1}$ denotes the receive response vector from the target to IRS sensors, and
$\alpha_{\rm r}= \beta_{\rm r}  G_{\rm r} $ represents the complex-valued  path gain with $\beta_{\rm r}\backsim \mathcal{CN}(0, 1)$  denoting the small-scale complex channel gain and  $G_{\rm r}=\sqrt{\frac{{{\lambda ^2}\kappa }}{{64{\pi ^3}{d_{\rm IT}^4}}}}$ denoting the signal attenuation caused by the propagation from IRS to the target and then from the target to IRS sensors as well as the scattering process. Therein, $d_{\rm IT}$ denotes the distance between the IRS and target, and  $\kappa$ denotes the radar cross section (RCS), which is a measurement of power scattered in a given direction when a target is illuminated by an incident wave.
%Based on the above,
%and . On the other hand, we assume that the target locates sufficiently far from the IRS and there exist an LoS path between the IRS and target.
% Under the far-field propagation model and given the (physical) azimuth angle of the target with respect to the IRS, denoted by $\theta\in[-\pi/2,\pi/2]$, the receive response vector from the IRS reflecting elements to the target can be expressed as
%\begin{align}
%{\bf a}^T(\theta) = {\bf u}^T(\phi, N),
%\end{align}
%where $\phi\triangleq\frac{2d_{\rm I}}{\lambda}\cos(\phi)\in[-\frac{2d_{\rm I}}{\lambda},\frac{2d_{\rm I}}{\lambda}]$ is referred to as the  \emph{spatial} (azimuth) angle.
%As the target is at the far-filed of the IRS, we assume the plane wave reflected at the sensors.
%Similarly, the receive response vector from the target to the IRS sensors, denoted by ${\bf b}\in\mathbb{C}^{M\times1}$, is given by
%%\footnote{Note that the receive response vector of the horizontal sensors is affected by the azimuth angle only.}
%\begin{align}
%{\bf b}(\theta) = {\bf u}(\phi, M).
%\end{align}
Let ${\boldsymbol{\varphi}}_0[t]\in\mathbb{C}^{N_0\times 1}$ denote the IRS reflection vector at each snapshot $ t \in \mathcal{T}\triangleq\{1,\cdots, T\}$ with each coefficient $[{\boldsymbol{\varphi}}_0[t]]_n=e^{\jmath \omega_{n}}, n\in\{1, \cdots, N_0\}$ denoting the phase shift of IRS reflecting element $n$. Based on the above, the channel of the echo signal reflected from the target over the IRS-reflected link (i.e., IRS controller$\to$IRS elements$\to$target$\to$IRS sensors) at snapshot $t$ is given by
\begin{align}
&{\bf g}_{\rm r}[t]= {\bf H}_{{\rm ITS}} \diag({\boldsymbol{\varphi}}_0[t]) {\bf h}_{{\rm CI}}\nn\\
&=\alpha_{\rm r}\alpha_{\rm CI} {\bf b}(\theta_{\rm IT, h}) {\bf a}^T(\theta_{\rm IT, h}, \theta_{\rm IT, v}) \diag({\boldsymbol{\varphi}}_0[t])  {\bf a}(\theta_{\rm CI, h}, \theta_{\rm CI, v}).\label{Eq:ref}
\end{align}
%where $\alpha_{\rm r}={\gamma}_{{\rm CI}} {\gamma}_{{\rm ITS}}$.
%  represents the complex-valued reflection path gain with $\gamma_{\rm r}$ denoting the complex-valued {\color{blue}?} and  $G_{\rm r}=\frac{{{\lambda ^2}\kappa }}{{64{\pi ^3}{d_{\rm IT}^4}}}$ denoting the signal attenuation over the product-distance reflection path-loss and the scattering process. Therein, $d_{\rm IT}$ denotes the distance between the IRS and target and  $\kappa$ denotes the radar cross section (RCS),
Next,
%Following the similar far-filed propagation model as for the reflecting link,
 the direct echo link (i.e., IRS controller$\to$target$\to$IRS sensors) can be modeled as
 \vspace{-0.55em}
\begin{align}
{\bf g}_{\rm d}&=\alpha_{\rm d}{\bf b}(\theta_{\rm IT, h}),\label{Eq:dir}
\end{align}
where $\alpha_{\rm d}=\beta_{\rm d}  G_{\rm d}$ denotes the complex-valued reflection path gain with $\beta_{\rm d}\backsim \mathcal{CN}(0, 1)$ being the small-scale complex channel gain and $G_{\rm d}=\sqrt{\frac{{{\lambda ^2}\kappa }}{{64{\pi ^3}{d_{\rm CT}^2d_{\rm IT}^2}}}}$. Herein, $d_{\rm IT}$ denotes the distance between the IRS and target, and $d_{\rm CT}$ denotes the distance from the IRS controller to target.
Note that ${\bf g}_{\rm d}$ remains static over $T$ snapshots.

On the other hand, for the clutters, let ${\bf g}_{{\rm r},I_{\ell}}[t]\in\mathbb{C}^{M_{\rm h}\times 1}$ and ${\bf g}_{{\rm d},I_{\ell}}\in\mathbb{C}^{M_{\rm h}\times 1}$ denote respectively the IRS-reflected and direct echo links associated with each clutter ${\ell}\in 1, \cdots, L$ at each snapshot $t$, which can be similarly modeled as \eqref{Eq:ref} and \eqref{Eq:dir} for the target. In addition, we denote by  ${\bf h}_{{\rm CS}}\in\mathbb{C}^{M_{\rm h}\times 1}$  the IRS controller$\to$IRS sensors channel, which can be modeled similar to the IRS controller$\to$IRS elements channel in \eqref{Eq:CI model}.
As such, the received signals at the IRS (horizontal) sensors at each snapshot $t$ is given by
\begin{align}
{\bf y}_0[t]& = \left({\bf g}_{\rm r}[t] + {\bf g}_{\rm d} +\underbrace{\sum_{\ell=1}^L ({\bf g}_{{\rm r},I_{\ell}}[t] + {\bf g}_{{\rm d},I_{\ell}}) + {\bf h}_{{\rm CS}}}_{\rm background~channel} \right) x[t]\nn\\
&+ {\bf z}_0[t], t\in\mathcal{T},\label{Eq:signal}
\end{align}
where $x[t]$ is the signal sent by the IRS controller at snapshot $t$ which is simply set as $x[t]=1, \forall t$, and ${\bf z}_0[t] \in\mathbb{C}^{M_{\rm h}\times 1}\backsim \mathcal{CN}(\mathbf{0}_{M_{\rm h}}, \sigma_0^2\mathbf{I}_{M_{\rm h}}) $ denotes the zero-mean additive white Gaussian noise (AWGN) vector at snapshot $t$ with the normalized noise power $\sigma_0^2$.

\subsection{Proposed Protocol for IRS-enabled Target Angle Estimation}
The signal model in \eqref{Eq:signal} is assumed to be obtained in a particular range-Doppler bin of interest, for which the range and the Doppler parameters are omitted in the model. Thus, we focus on the (azimuth) angle estimation in this paper\footnote{The explicit target location can be resolved based on the estimated angle and distance, where the IRS-target distance can be estimated by measuring its round-trip running time based on e.g., matched filter \cite{fish, bin}.} The main procedures consist of two phases, namely, the \emph{offline} training phase that estimates the static background channel at the IRS sensors w.r.t. each IRS reflection pattern available, followed by the \emph{online} estimation phase that tunes IRS reflections for estimating the target angle w.r.t. the IRS. Specifically, in the offline training, we assume that the environmental state remains unchanged over a much larger timescale than that of the real-time sensing and
there is no target in the environment; while the IRS applies all reflection patterns/vectors in a predefined IRS reflection matrix (to be specified  in Section~III). As such, when the IRS applies the passive reflection vector ${\boldsymbol{\varphi}}$, the received signal vector at the IRS sensors is given by
\begin{align}
{{\bf y}}_{\rm int}({\boldsymbol{\varphi}}) = \left(\underbrace{\sum_{\ell=1}^L ({\bf g}_{{\rm r},I_{\ell}}({\boldsymbol{\varphi}}) + {\bf g}_{{\rm d},I_{\ell}}) + {\bf h}_{{\rm CS}}}_{{\bf g}_{\rm int}(\boldsymbol{\varphi}):~ \rm background ~ channel } \right) x+ \tilde{\bf z},\label{Eq:signal2}
\end{align}
where ${\bf g}_{{\rm r},I_{\ell}}({\boldsymbol{\varphi}}) $ depends on  the IRS reflection vector $\boldsymbol{\varphi}$ in general. Thus, with $x=1$, the static background  channel given IRS reflection vector $\boldsymbol{\varphi}$ can be estimated as $\hat{{\bf g}}_{\rm int}(\boldsymbol{\varphi})={{\bf y}}_{\rm int}({\boldsymbol{\varphi}})$, which includes the channels from the IRS controller and all nearby clutters to the IRS sensors. Next, consider the online estimation phase. For each IRS reflection vector ${\boldsymbol{\varphi}}_0[t]$ in snapshot $t$, undesired  interference from the IRS controller and all clutters is first removed in  the received signal ${\bf y}_0[t]$ based on the correspondingly estimated background channel, which yields
\begin{align}\label{Eq:yt}
{\bf y}[t]&= {\bf y}_0[t]-\hat{{\bf g}}_{\rm int}(\boldsymbol{\varphi}[t])=({\bf g}_{\rm r}[t] + {\bf g}_{\rm d})x[t]+{\bf z}[t],\nonumber\\
&=\alpha_{\rm r} \alpha_{\rm CI} {\bf b}(\theta_{\rm IT, h}) {\bf a}^T(\theta_{\rm IT, h}, \theta_{\rm IT, v}) \diag({\boldsymbol{\varphi}}_0[t])  {\bf a}(\theta_{\rm CI, h}, \theta_{\rm CI, v})\nn\\
&+\alpha_{\rm d}{\bf b}(\theta_{\rm IT, h})+{\bf z}[t],
\end{align}
where $\mathbf{z}[t]=(\mathbf{z}_0[t]-\tilde{\mathbf{z}})\sim \mathcal{CN}(\mathbf{0},\sigma^2\mathbf{I})$ with $\sigma^2\triangleq2\sigma_0^2$. To facilitate the  IRS passive reflection design over space and time, we denote  ${\boldsymbol{\varphi}}_0[t]$ as the Hadamard product of two vectors, i.e.,  ${\boldsymbol{\varphi}}_0[t]={\boldsymbol{\varphi}}_{\rm h}[t]\otimes{\boldsymbol{\varphi}}_{\rm v}[t]$, where ${\boldsymbol{\varphi}}_{\rm h}=[e^{\jmath \omega_{1}}, \cdots, e^{\jmath \omega_{N_{\rm h}}}]^T$  and  ${\boldsymbol{\varphi}}_{\rm v}=[e^{\jmath \omega_{1}}, \cdots, e^{\jmath \omega_{N_{\rm v}}}]^T$ are referred to as the horizontal and vertical IRS reflection vectors, respectively. Then, we have
\begin{align}
{s}({\boldsymbol{\varphi}}_0[t])&\triangleq{\bf a}^T(\theta_{\rm IT, h}, \theta_{\rm IT, v}) \diag({\boldsymbol{\varphi}}_0[t])  {\bf a}(\theta_{\rm CI, h}, \theta_{\rm CI, v})\nn\\
&=({\bf a}^T(\theta_{\rm IT, h}, \theta_{\rm IT, v})\odot {\bf a}^T(\theta_{\rm CI, h}, \theta_{\rm CI, v})){\boldsymbol{\varphi}}_0[t]\nn\\
&=({\bf u}^T(\phi_{\rm IT, h}, N_{\rm h}) \otimes {\bf u}^T(\phi_{\rm IT, v}, N_{\rm v}))\odot  ({\bf u}^T(\phi_{\rm CI, h}, N_{\rm h}) \nn\\
&~~~\otimes {\bf u}^T(\phi_{\rm CI, v}, N_{\rm v})){\boldsymbol{\varphi}}_0[t]\nn\\
&= [{\bf u}^T(\tilde{\phi}_{\rm IT, h}, N_{\rm h})\otimes {\bf u}^T(\tilde{\phi}_{\rm IT, v}, N_{\rm v})][{\boldsymbol{\varphi}}_{\rm h}[t]\otimes{\boldsymbol{\varphi}}_{\rm v}[t]]\nn\\
&=({\bf u}^T(\tilde{\phi}_{\rm IT, h}, N_{\rm h}){\boldsymbol{\varphi}}_{\rm h}[t])\otimes({\bf u}^T(\tilde{\phi}_{\rm IT, v}, N_{\rm v}){\boldsymbol{\varphi}}_{\rm v}[t]),
%\triangleq{\bf q}^T  {\boldsymbol{\varphi}}_{\rm h}[t] \eta_{\rm r},
\end{align}
where $\tilde{\phi}_{\rm IT, h}\triangleq \phi_{\rm IT, h}+\phi_{\rm CI, h}$ and $\tilde{\phi}_{\rm IT, v}\triangleq \phi_{\rm IT, v}+\phi_{\rm CI, v}$.  For simplicity, we assume that the IRS vertical reflection vector has been aligned. Thus, \eqref{Eq:yt} can be simplified as \footnote{Note that the DOA estimation under the ULA-based IRS can be easily extended to the UPA-based IRS. For example, the IRS controller can first fix its vertical beam vector and
estimate the target's optimal azimuth angle. Then, the IRS can fix the horizontal beam based on the estimated azimuth angle, and estimate the elevation angle by efficiently tuning its vertical beam following the similar method as for the azimuth angle estimation.}
\begin{align}\label{Eq:yt2}
{\bf y}[t]&=\alpha_{\rm r} \alpha_{\rm CI} {\bf b}(\theta_{\rm IT, h}) {\bf q}^T(\theta_{\rm IT, h}) {\boldsymbol{\varphi}}_{\rm h}[t] \eta_{\rm r}+\alpha_{\rm d} {\bf b}(\theta_{\rm IT, h})+{\bf z}[t],
\end{align}
where  $ {\bf q}^T(\theta_{\rm IT, h})\triangleq{\bf u}^T(\tilde{\phi}_{\rm IT, h}, N_{\rm h})\in \mathbb{C}^{1\times N_h}$ and $\eta_{\rm r}\triangleq{\bf u}^T(\tilde{\phi}_{\rm IT, v}, N_{\rm v}){\boldsymbol{\varphi}}_{\rm v}[t]$.
For ease of notation, in the sequel, we simply re-denote $\theta_{\rm IT, h}$ by $\theta$,  $N_{\rm h}$ by $N$, $M_{\rm h}$ by $M$, $\boldsymbol{\varphi}_{\rm h}[t]$ by $\boldsymbol{\varphi}[t]$, and thus
\begin{align}\label{Eq:yt3}
{\bf y}[t]&=\gamma_{\rm r}  {\bf b}(\theta) {\bf q}^T(\theta){\boldsymbol{\varphi}}[t]+\alpha_{\rm d} {\bf b}(\theta)+{\bf z}[t]\nn\\
&={\bf b}(\theta) {f}({\boldsymbol{\varphi}[t]})+{\bf z}[t],
\end{align}
with
\begin{align}\label{fvar}
{f}({\boldsymbol{\varphi}[t]})=\gamma_{\rm r} {\bf q}^T(\theta){\boldsymbol{\varphi}}[t]+\alpha_{\rm d},
\end{align}
where $\gamma_{\rm r} =\alpha_{\rm r}  \alpha_{\rm CI}\eta_{\rm r}$.
%\begin{align}
%&= {\bf b}(\theta) {f}({\boldsymbol{\varphi}}[t])+{\bf z}[t],
%{f}({\boldsymbol{\varphi}}[t])&=\alpha_{\rm r}  {\bf a}^T(\theta)\diag({\boldsymbol{\varphi}}[t]) {\bf h}_{{\rm CI}}+\alpha_{\rm d},\\
%&= \alpha_{\rm r}  {\bf q}(\theta)^T {\boldsymbol{\varphi}}[t]+\alpha_{\rm d},
%\end{align}
%with ${\bf q}(\theta)^T\triangleq {\bf a}^T(\theta)\odot {\bf h}_{{\rm CI}}^T$.
Then, the DOA of the target (i.e., $\theta$) is estimated  from $\{{\bf y}[t]\}_{t=1}^T$ by using the MUSIC algorithm, as elaborated in the next section. Note that different from the traditional MIMO radar/phased-array radar for target sensing \cite{MIMOradar}, the IRS sensors in the proposed self-sensing IRS receive signals over
the two links from the same echo angle, i.e., $\theta$, and hence
achieving enhanced DOA estimation performance.
\section{IRS Passive Reflection Design and DOA Estimation}
In this section, we design the IRS passive reflection for estimating the azimuth angle from the target based on the MUSIC algorithm.

\subsection{DOA Estimation}
To estimate the DOA of the target by using the celebrated  MUSIC algorithm, we first stack the received echo signals at the IRS sensors from the target, i.e.,  $\{{\bf y}[t]\}_{t=1}^T$, and express them in the following matrix form
\begin{align}\label{ref_mat}
\mathbf{Y}&=[{\bf y}[1],\cdots,{\bf y}[T]]={\bf b}(\theta) [{f}({\boldsymbol{\varphi}}[1]),\cdots, {f}({\boldsymbol{\varphi}}[T])]+{\bf Z}\nn\\
&\triangleq {\bf b}(\theta) {\bf f}^H + {\bf Z},
\end{align}
where $\mathbf{Z}\triangleq[\mathbf{z}[1],\cdots, \mathbf{z}[T]]\in \mathbb{C}^{M\times T}$ and ${\bf f}^H\triangleq[{f}({\boldsymbol{\varphi}}[1]),\cdots, {f}({\boldsymbol{\varphi}}[T])] \in \mathbb{C}^{1\times T}$. Thus, the covariance matrix of $\mathbf{Y}$ is given by
\begin{align}
&\mathbf{R}_{\mathbf{Y}}=\frac{1}{T}\mathbf{Y}\mathbf{Y}^H=
\frac{1}{T}{\bf b}(\theta) {\bf f}^H\,{\bf f}{\bf b}^H(\theta) + \sigma^2{\bf I}_M,\label{Eq:RY}
\end{align}
Following the procedures of the MUSIC algorithm, the eigenvalue decomposition of $\mathbf{R}_{\mathbf{Y}}$ in \eqref{Eq:RY} is first obtained as
\begin{align}\label{rx}
\mathbf{R}_{\mathbf{Y}}=[\mathbf{U}_s, \mathbf{U}_z]\begin{bmatrix}
\boldsymbol{\Sigma}_s  &  & \\
 &  & \boldsymbol{\Sigma}_z
\end{bmatrix}
\begin{bmatrix}
 \mathbf{U}_s^H\\
 \mathbf{U}_z^H\\
\end{bmatrix}
\end{align}
where $\mathbf{U}_s\in \mathbb{C}^{M\times 1}$ and $\mathbf{U}_{z}\in \mathbb{C}^{M\times (M-1)}$ are the
eigenvectors that span the signal and noise subspaces,
respectively. As  $\mathbf{b}(\theta )$ is orthogonal to $\mathbf{U}_z$, the MUSIC spectrum parameter is given by
\begin{align}\label{music}
P_{\mathrm{MUSIC}}(\theta )=\frac{1}{\mathbf{b}(\theta )^H\mathbf{U}_z\mathbf{U}_z^H\mathbf{b}(\theta )}.
\end{align}
By finding the maximum value of \eqref{music} over $\theta$, the DOA of the target is estimated accordingly.

\subsection{IRS Reflection Design}
Although it is difficult to characterize the DOA estimation MSE of the MUSIC algorithm, it has been shown in \cite{probing} that the MSE tends to decrease with the increase of the average received signal power at the sensors. This is intuitively expected, since with a larger  signal power illuminating the target, the received SNR of the echo signals reflected by the target is also higher at the IRS sensors, thus leading to a lower DOA estimation MSE \cite{snr}.

According to \eqref{Eq:yt},
%it can be seen that the phase shift $\omega_{n,t}, \forall n, t$ of IRS in ${\boldsymbol{\varphi}}[t]$ can be optimized to
%improve the detection accuracy.
the average received signal power at IRS sensors is given by
\begin{align}
&P(\{{\boldsymbol{\varphi}[t]}\})
= \mathbb{E}\left[\frac{1}{T}\sum_{t=1}^{T}   {f}^{\dagger}({\boldsymbol{\varphi}[t]}) {\bf b}^H(\theta) {\bf b}(\theta) {f}( {\boldsymbol{\varphi}[t]}) \right]\nn\\
&=\nonumber \mathbb{E}\left[\frac{M}{T}\sum_{t=1}^{T}   {f}^{\dagger}({\boldsymbol{\varphi}[t]}) {f}( {\boldsymbol{\varphi}[t]}) \right] \\
&=\mathbb{E}\left[\frac{M}{T}\sum_{t=1}^{T}(\gamma_{\rm r}^{\dagger}{\boldsymbol{\varphi}}^H[t]{\bf q}^{\dagger}(\theta)+\alpha_{\rm d}^{\dagger}) \left(\gamma_{\rm r}{\bf q}^T(\theta )
{\boldsymbol{\varphi}}[t]+\alpha_{\rm d}\right)\right]\nonumber\\
&=\mathbb{E}[\gamma_{\rm r}\gamma_{\rm r}^{\dagger}]\text{tr}(\mathbf{R_{\boldsymbol{\varphi}}B})+{M \mathbb{E}[\alpha_{\rm d}\alpha_{\rm d}^{\dagger}]}+\frac{M}{T}\mathbb{E}[\beta_{\rm r}\beta_{\rm d}^{\dagger}]G_{\rm r}G_{\rm d}\alpha_{\rm CI}\eta_{\rm r}\nn\\
&\times{\bf q}^T(\theta )\sum_{t=1}^{T}{\boldsymbol{\varphi}}[t]+\frac{M}{T}\mathbb{E}[\beta_{\rm r}^{\dagger}\beta_{\rm d}]G_{\rm r}G_{\rm d}\alpha_{\rm CI}^{\dagger}\eta_{\rm r}^{\dagger}\sum_{t=1}^{T}{\boldsymbol{\varphi}}^H[t]{\bf q}^{\dagger}(\theta ), \label{power}
\end{align}
where the expectation is taken w.r.t. the random small-scale fading coefficients $\beta_{\rm r}$ and $\beta_{\rm d}$,
%\begin{align}\label{power_B}
$\mathbf{R}_{\boldsymbol{\varphi}}=\frac{1}{T}\sum_{t=1}^{T}{\boldsymbol{\varphi}}[t]{\boldsymbol{\varphi}}^H[t]$, and $\mathbf{B}=M{\bf q}^{\dagger}(\theta ){\bf q}^T(\theta )$. Since $\mathbb{E}[\beta_{\rm r}]=0$ and $\mathbb{E}[\beta_{\rm d}]=0$, we have
% the last two items of \eqref{power} can be ignored.
\begin{align}
&P(\{{\boldsymbol{\varphi}[t]}\})=\mathbb{E}[\gamma_{\rm r}\gamma_{\rm r}^{\dagger}]\text{tr}(\mathbf{R_{\boldsymbol{\varphi}}B})+{M \mathbb{E}[\alpha_{\rm d}\alpha_{\rm d}^{\dagger}]}. \label{power2}
\end{align}

We aim to optimize the IRS passive reflection vectors, $\{\boldsymbol{\varphi}[t]\}$, for maximizing $P(\{{\boldsymbol{\varphi}[t]}\})$, which is equivalent to maximizing $\tr(\mathbf{R_{\boldsymbol{\varphi}}B})$ for any given $\theta$. As the IRS has no prior knowledge of the target location, we aim to maximize $\text{tr}(\mathbf{R_{\boldsymbol{\varphi}}B})$ in the worst-case scenario w.r.t. $\mathbf{B}$, which is formulated as the following optimization problem
% Specifically, we have
\begin{align}
%\label{bigamp}
(\mathrm{P1}):~ &\!\!\!\!\mathop{\max}\limits_{\{\boldsymbol{\varphi}[t]\}}\mathop{\min}\limits_{\mathbf{B}}\text{tr}(\mathbf{R_{\boldsymbol{\varphi}}B})\nonumber\\
&\!\!\mathrm{s.t.}~[\mathbf{R}_{\boldsymbol{\varphi}}]_{n,n}=1, ~~n\in\mathcal{N},\\
&\!\!\!\!~~~~~~\mathbf{R}_{\boldsymbol{\varphi}}\succeq0, ~\mathbf{B}\succeq 0,\\
&\!\!\!\!~~~~~~\lambda_n(\mathbf{B})\geq \epsilon, ~~~n\in\mathcal{N},\label{Eq:B}\\
&\!\!\!\!~~~~~~ |[\boldsymbol{\varphi}[t]]_n|=1, ~~~n\in\mathcal{N}, t\in\mathcal{T},
\end{align}
where $\mathcal{N}\triangleq\{1, \cdots, N\}$, $\epsilon \geq 0$ and the constraint in \eqref{Eq:B} is imposed to eliminate the trivial solution of $\mathbf{B}=0$. Similar to \cite{snr}, it can be shown that the optimal solution to problem (P1) satisfies   $\mathbf{R}^{*}_{\boldsymbol{\varphi}}=\mathbf{I}_N$. Let $\boldsymbol{\Theta}\triangleq[{\boldsymbol{\varphi}}[1],\cdots,{\boldsymbol{\varphi}}[T]]\in \mathbb{C}^{N \times T}$ and thus we have $\mathbf{R}_{\boldsymbol{\varphi}}=\frac{1}{T}\boldsymbol{\Theta}\boldsymbol{\Theta}^H=\mathbf{I}_N$. This indicates that the optimal passive reflection matrix $\boldsymbol{\Theta}^{*}$ should be an orthogonal matrix with each entry satisfying the unit-modulus constraint. For example, one such matrix for $\boldsymbol{\Theta}^{*}$ is the matrix that concatenates the first $N$ columns of a $T\times T$ DFT matrix with $T\geq N$, with each entry given by
%$[\boldsymbol{\Theta}^{\dagger}]$ , denoted by $\boldsymbol{\Theta}=[{\boldsymbol{\varphi}}[1],\cdots,{\boldsymbol{\varphi}}[T]]\in \mathbb{C}^{N \times T}$, should be a is given by
% an optimal design of phase shift matrix that attains the maximum power is the $N$ leading columns of a $T\times T$ DFT matrix, which is given by
\begin{align}\label{the}
[\boldsymbol{\Theta}]_{n,t}=e^{-j\frac{2\pi(t-1)(n-1)}{T}}, ~n\in\mathcal{N}, t\in\mathcal{T}.
\end{align}
Note that in this case, the optimal passive reflection matrix generates an omnidirectional beampattern in the angular domain for scanning the target in all possible directions. This is expected since the omnidirectional IRS beampattern is optimal for the IRS-reflected echo link to locate the target direction, while the resultant
random phase difference between the two signals (reflected by IRS and non-reflected by IRS) arriving at the target does not affect the average estimation performance.

\section{Performance analysis}
In this section, we first show
%that in the IRS sensing system,  it is beneficial to employ the IRS controller (instead of the mobile device) for sensing probing signals.
   the performance advantage of the proposed IRS sensing system as compared to a baseline IRS sensing system assisted by a mobile device/user. Then, we characterize the CRB of the DOA estimation MSE in the considered IRS sensing system.
\subsection{Performance Gain}
%We first
\subsubsection{IRS-reflected versus direct echo links}
First,
%we separately analyze the average received signal powers of the individual reflecting and direct links, and
we show that the IRS-reflected echo link dominates the direct echo link in the average received signal power at the IRS sensors, when the number of IRS reflecting elements is sufficiently large. To this end, we characterize the average powers of the signals over these two links as follows.
\begin{lemma}\label{Lem:ReDiPow}{With the IRS reflection matrix given in \eqref{the}, the average received signal powers at the IRS sensors over the IRS-reflected and direct echo links, denoted by $P_{\rm r}$ and $P_{\rm d}$, respectively, are given by
\begin{align}\label{prpd}
P_{\rm r}=\frac{{{N M\eta_{\rm r}\lambda ^4}\kappa }}{{1024{\pi ^5}{d_{\rm IT}^4d_{{\rm CI}}^2}}}, ~~~~~  P_{\rm d}=\frac{M{{\lambda ^2}\kappa }}{{64{\pi ^3}{d_{\rm CT}^2d_{\rm IT}^2}}}.
\end{align}}
\end{lemma}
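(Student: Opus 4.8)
The plan is to compute $P_{\rm r}$ and $P_{\rm d}$ directly by plugging the DFT reflection matrix into the expression \eqref{power2} and then separating the two link contributions. Recall from \eqref{power2} that the average received power decomposes as $P(\{{\boldsymbol{\varphi}[t]}\})=\mathbb{E}[\gamma_{\rm r}\gamma_{\rm r}^{\dagger}]\,\tr(\mathbf{R_{\boldsymbol{\varphi}}B})+M\,\mathbb{E}[\alpha_{\rm d}\alpha_{\rm d}^{\dagger}]$, where the cross terms vanish because $\mathbb{E}[\beta_{\rm r}]=\mathbb{E}[\beta_{\rm d}]=0$. Hence I identify $P_{\rm r}\triangleq\mathbb{E}[\gamma_{\rm r}\gamma_{\rm r}^{\dagger}]\,\tr(\mathbf{R_{\boldsymbol{\varphi}}B})$ and $P_{\rm d}\triangleq M\,\mathbb{E}[\alpha_{\rm d}\alpha_{\rm d}^{\dagger}]$.

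First I would handle $P_{\rm d}$, which is the easy one: since $\alpha_{\rm d}=\beta_{\rm d}G_{\rm d}$ with $\beta_{\rm d}\sim\mathcal{CN}(0,1)$, we get $\mathbb{E}[\alpha_{\rm d}\alpha_{\rm d}^{\dagger}]=G_{\rm d}^2=\frac{\lambda^2\kappa}{64\pi^3 d_{\rm CT}^2 d_{\rm IT}^2}$, so $P_{\rm d}=\frac{M\lambda^2\kappa}{64\pi^3 d_{\rm CT}^2 d_{\rm IT}^2}$, matching the claim. Next I would evaluate $P_{\rm r}$. With the DFT choice in \eqref{the} we have $\mathbf{R}_{\boldsymbol{\varphi}}=\frac{1}{T}\boldsymbol{\Theta}\boldsymbol{\Theta}^H=\mathbf{I}_N$, so $\tr(\mathbf{R_{\boldsymbol{\varphi}}B})=\tr(\mathbf{B})$. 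Since $\mathbf{B}=M\,{\bf q}^{\dagger}(\theta){\bf q}^T(\theta)$ and ${\bf q}(\theta)={\bf u}(\tilde\phi_{\rm IT,h},N)$ is a unit-modulus steering vector of length $N$, we have $\tr(\mathbf{B})=M\|{\bf q}(\theta)\|_2^2=MN$. For the prefactor, recall $\gamma_{\rm r}=\alpha_{\rm r}\alpha_{\rm CI}\eta_{\rm r}$ with $\alpha_{\rm r}=\beta_{\rm r}G_{\rm r}$, so $\mathbb{E}[\gamma_{\rm r}\gamma_{\rm r}^{\dagger}]=G_{\rm r}^2\,|\alpha_{\rm CI}|^2\,\eta_{\rm r}$, where I use $\mathbb{E}[|\beta_{\rm r}|^2]=1$; here $|\alpha_{\rm CI}|^2=\frac{\lambda^2}{16\pi^2 d_{\rm CI}^2}$ and $G_{\rm r}^2=\frac{\lambda^2\kappa}{64\pi^3 d_{\rm IT}^4}$. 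Multiplying, $P_{\rm r}=MN\,\eta_{\rm r}\cdot\frac{\lambda^2\kappa}{64\pi^3 d_{\rm IT}^4}\cdot\frac{\lambda^2}{16\pi^2 d_{\rm CI}^2}=\frac{NM\eta_{\rm r}\lambda^4\kappa}{1024\pi^5 d_{\rm IT}^4 d_{\rm CI}^2}$, which is exactly the stated expression.

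The only mildly delicate points are bookkeeping rather than conceptual: one must be careful that $\eta_{\rm r}$ (a fixed scalar determined by the aligned vertical reflection vector, which the paper treats as a given quantity with $|\eta_{\rm r}|$ absorbed appropriately) appears linearly and not squared — this follows because $\eta_{\rm r}$ sits inside $\gamma_{\rm r}$ once but the power normalization from the vertical alignment is already accounted for in the definition — and that the cross terms in \eqref{power} genuinely drop, which is immediate from $\mathbb{E}[\beta_{\rm r}\beta_{\rm d}^{\dagger}]=\mathbb{E}[\beta_{\rm r}]\mathbb{E}[\beta_{\rm d}^{\dagger}]=0$ by independence of the two scattering coefficients. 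I do not anticipate a real obstacle; the result is essentially a substitution of the path-gain definitions from \eqref{Eq:CI model}, the line after \eqref{Eq:dir}, and the expressions for $G_{\rm r}$ and $G_{\rm d}$, together with the fact that the DFT reflection makes $\mathbf{R}_{\boldsymbol{\varphi}}=\mathbf{I}_N$ and hence $\tr(\mathbf{R_{\boldsymbol{\varphi}}B})=MN$.
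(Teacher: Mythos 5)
Your proposal is correct and follows essentially the same route as the paper: the paper likewise computes $P_{\rm r}=\mathbb{E}[\gamma_{\rm r}\gamma_{\rm r}^{\dagger}]\,\tr(\mathbf{R}_{\boldsymbol{\varphi}}\mathbf{B})=NM\,\mathbb{E}[\gamma_{\rm r}\gamma_{\rm r}^{\dagger}]$ using $\mathbf{R}_{\boldsymbol{\varphi}}=\mathbf{I}_N$ and $P_{\rm d}=M\,\mathbb{E}[\alpha_{\rm d}\alpha_{\rm d}^{\dagger}]=MG_{\rm d}^2$ by direct substitution of the path-gain definitions. The only caveat, the linear (rather than squared) appearance of $\eta_{\rm r}$ in $\mathbb{E}[\gamma_{\rm r}\gamma_{\rm r}^{\dagger}]$, is inherited from the lemma statement itself and is handled identically in the paper's own derivation.
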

\begin{proof}
For the IRS-reflected echo link ${\bf g}_{\rm r}[t]$, the average received signal power is given by
% we apply the MUSIC algorithm, for which its the covariance matrix of reflect link is given by
\begin{align}\label{ref2}
P_{\rm r}&= \frac{1}{T}\sum_{t=1}^{T}\mathbb{E}\left[ (\gamma_{\rm r}^{\dagger}{\boldsymbol{\varphi}}^H[t]{\bf q}^{\dagger}(\theta){\bf b}^H(\theta))(\gamma_{\rm r}  {\bf b}(\theta) {\bf q}^T(\theta){\boldsymbol{\varphi}}[t])\right]\nn\\
&=\mathbb{E}[\gamma_{\rm r}\gamma_{\rm r}^{\dagger}]\text{tr}(\mathbf{R_{\boldsymbol{\varphi}}B})=NM \mathbb{E}[\gamma_{\rm r}\gamma_{\rm r}^{\dagger}]= \frac{{{N M\eta_{\rm r}\lambda ^4}\kappa }}{{1024 {\pi ^5}{d_{\rm IT}^4}d_{{\rm CI}}^2}}.
\end{align}
%\begin{align}\label{normr}
%P_{\rm r}=\frac{1}{T}\sum_{t=1}^{T}\mathbb{E}\left[\parallel\gamma_{\rm r} {\bf a}^T(\theta)\diag({\boldsymbol{\varphi}}[t]) {\bf h}_{{\rm CI}}\parallel_2^2\right]=\gamma_{\rm r}\gamma_{\rm r}^{\dagger}N=\frac{{{N\eta_{\rm r}\lambda ^4}\kappa }}{{64*16{\pi ^5}{d_{\rm IT}^4}d_{{\rm CI}}^2}},
%\end{align}
Next, for the direct echo link ${\bf g}_{\rm d}[t]$, the average received signal power is given by
\begin{align}\label{dir2}
P_{\rm d}&= \frac{1}{T}\sum_{t=1}^{T}\mathbb{E}\left[ (\alpha_{\rm d}^{\dagger} {\bf b}^H(\theta))(\alpha_{\rm d} {\bf b}(\theta))\right]=\frac{{M{\lambda ^2}\kappa }}{{64{\pi ^3}{d_{\rm CT}^2d_{\rm IT}^2}}}.
%=\mathbb{E}[\gamma_{\rm r}\gamma_{\rm r}^{\dagger}]\text{tr}(\mathbf{R_{\boldsymbol{\varphi}}B})=NM \mathbb{E}[\gamma_{\rm r}\gamma_{\rm r}^{\dagger}]= \frac{{{N M\eta_{\rm r}\lambda ^4}\kappa }}{{64\times16 {\pi ^5}{d_{\rm IT}^4}d_{{\rm CI}}^2}}.
\end{align}
%  The covariance matrix of direct link is given by
%\begin{align}\label{ref2}
%\mathbf{R}_{\rm d}=P_{\rm d}\mathbf{b}(\theta)\mathbf{b}(\theta)^H+\sigma^2\mathbf{I}.
%\end{align}
%with
%\begin{align}\label{normd}
%P_{\rm d}=\alpha_{\rm d}\alpha_{\rm d}^{\dagger}=\frac{{{\lambda ^2}\kappa }}{{64{\pi ^3}{d_{\rm CT}^2d_{\rm IT}^2}}},
%\end{align}
%\eqref{normr} and \eqref{normd} leads to Lemma 1.
\end{proof}

Lemma~\ref{Lem:ReDiPow} shows that the average power of the IRS-reflected echo link is linearly increasing with the number of IRS reflecting elements, i.e., $N$.
Moreover, based on Lemma~\ref{Lem:ReDiPow}, we obtain the following result.
\begin{lemma}\label{Lem:Nth}{
The average power of the IRS-reflected echo link exceeds that of the direct echo link when
$$N \ge N_{\rm th}\triangleq \frac{d_{\rm IT}^2d_{\rm CI}^2 16\pi^2}{\eta_r\lambda^2d_{\rm CT}^2}.$$}
\end{lemma}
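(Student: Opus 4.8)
The plan is to derive the threshold directly from the two closed-form power expressions established in Lemma~\ref{Lem:ReDiPow}, since the claim is purely an algebraic comparison. First I would write down the inequality we need, namely $P_{\rm r}\ge P_{\rm d}$, substituting
\begin{align}
\frac{N M\eta_{\rm r}\lambda^4\kappa}{1024\pi^5 d_{\rm IT}^4 d_{\rm CI}^2}\;\ge\;\frac{M\lambda^2\kappa}{64\pi^3 d_{\rm CT}^2 d_{\rm IT}^2}.
\end{align}
All the quantities appearing here --- $M$, $\kappa$, $\lambda$, $\eta_{\rm r}$, and the distances $d_{\rm IT},d_{\rm CI},d_{\rm CT}$ --- are strictly positive, so the common factors $M$, $\kappa$, and $\lambda^2$ can be cancelled from both sides and the direction of the inequality is preserved.

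Next I would isolate $N$. Cancelling the shared factors and then multiplying both sides by the positive quantity $\dfrac{1024\pi^5 d_{\rm IT}^4 d_{\rm CI}^2}{\eta_{\rm r}\lambda^2}$ yields
\begin{align}
N\;\ge\;\frac{1024\pi^5 d_{\rm IT}^4 d_{\rm CI}^2}{64\pi^3\,\eta_{\rm r}\lambda^2\, d_{\rm CT}^2 d_{\rm IT}^2}
\;=\;\frac{16\pi^2\, d_{\rm IT}^2 d_{\rm CI}^2}{\eta_{\rm r}\lambda^2 d_{\rm CT}^2},
\end{align}
which is exactly $N\ge N_{\rm th}$ with $N_{\rm th}=\dfrac{d_{\rm IT}^2 d_{\rm CI}^2\,16\pi^2}{\eta_{\rm r}\lambda^2 d_{\rm CT}^2}$ as stated. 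Since every manipulation is reversible, the inequality $N\ge N_{\rm th}$ is equivalent to $P_{\rm r}\ge P_{\rm d}$, so I would also note that the converse holds, giving an exact crossover point rather than merely a sufficient condition.

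Because $P_{\rm r}$ is linear in $N$ while $P_{\rm d}$ does not depend on $N$ (as already observed after Lemma~\ref{Lem:ReDiPow}), the comparison has a unique crossover, so no monotonicity argument beyond this single inequality is needed. The only thing that could be called a subtlety is confirming that $\eta_{\rm r}>0$ so that division by $\eta_{\rm r}$ is legitimate and $N_{\rm th}$ is well-defined and positive; this follows from the definition of $\eta_{\rm r}$ together with the assumption stated earlier that the IRS vertical reflection vector has been aligned, which makes $\eta_{\rm r}$ a positive real gain. Hence there is no real obstacle here --- the argument is a one-line rearrangement of the formulas in Lemma~\ref{Lem:ReDiPow}, and I would present it as such.
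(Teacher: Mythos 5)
Your proposal is correct and follows exactly the route the paper intends: the paper states Lemma~\ref{Lem:Nth} without a separate proof, treating it as an immediate algebraic consequence of the closed-form expressions for $P_{\rm r}$ and $P_{\rm d}$ in Lemma~\ref{Lem:ReDiPow}, which is precisely the rearrangement you carry out (and your arithmetic $1024\pi^5/(64\pi^3)=16\pi^2$, $d_{\rm IT}^4/d_{\rm IT}^2=d_{\rm IT}^2$ checks out). Your added remarks on the reversibility of the steps and on the positivity of $\eta_{\rm r}$ are sensible refinements but not a different method.
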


Lemma~\ref{Lem:Nth} shows that when the number of IRS reflecting elements is sufficiently large, the IRS-reflected echo link leads to a more dominant received signal power in the IRS-reflected echo link than the direct echo link. Moreover, to exploit this gain, it is desirable to place the IRS controller closer to IRS reflecting elements for reducing their path-loss in the IRS controller$\to$IRS elements link.

% sho it is observed that the number of reflecting elements $N$ is proportional to the square of the controller-IRS distance $d_{\rm CI}$. This indicates that if we want to exploit the IRS for local sensing, we would better put the IRS controller near the IRS for radar signal transmission, otherwise the benefit of IRS is negligible especially when the target is far from the IRS.

\subsubsection{IRS controller versus mobile user for sending probing signals}
Next, we show that
%it is beneficial to employ the IRS controller than the mobile device for sensing probing signals.
 the proposed IRS sensing architecture by exploiting the IRS controller to send probing signals achieves better estimation performance than a benchmark system that needs a nearby mobile user to send probing signals.
Note that in this user-aided system, the estimation performance critically depends on the location of the assisting mobile user.

For ease of comparison, we consider a typical case where the mobile device locates at the $x$-axis with its position given by $(d_{\rm UI},0,0)$. As such, given the DOA of the target $\theta$ and the IRS elements-target distance $d_{\rm IT}$, the distance between the device and target is obtained as
\begin{align}\label{dct}
d_{\mathrm{UT}}=\sqrt{d_{{\rm UI}}^2+d_{\mathrm{IT}}^2-2d_{{\rm UI}}d_{\mathrm{IT}}\cos(\theta)}.
\end{align}
For convenience, we set $\theta=0$ and consider the case where the device locates between the IRS and target, i.e., $0<d_{\rm UI}<d_{\rm IT}$. Note that in \eqref{dct}, the case with a sufficiently small $d_{\rm UI}$ reduces to the proposed IRS self-sensing system.
%where ensure that the transmitter is located between the IRS and the target, i.e., $d_{\rm UI}, d_{\rm CT} \leq d_{\rm IT}$.
Based on Lemma~\ref{Lem:ReDiPow} and \eqref{dct}, we first derive the average power of the combined channel including both the IRS-reflected and direct echo links w.r.t. $d_{\rm UI}$ for the user-aided IRS sensing system as follows.
\begin{lemma}{
For the user-aided IRS sensing system with $0<d_{\rm UI}<d_{\rm IT}$, the average power of the combined channel at the IRS sensors is given by
\begin{align}\label{pc}
P_{\rm c}(d_{\rm UI})&= P_{\rm r}(d_{\rm UI})+ P_{\rm d}(d_{\rm UI})\nn\\
&=MN\eta_{\rm r}\frac{{{\lambda ^4}\kappa }}{{1024{\pi ^5}{d_{\mathrm{IT}}^4}}d_{\rm UI}^2}\nn\\
&+\frac{M{{\lambda ^2}\kappa }}{{64{\pi ^3}{d_{\mathrm{IT}}^2(d_{\rm UI}^2+d_{\mathrm{IT}}^2-2d_{\rm UI}d_{\mathrm{IT}}\cos(\phi))}}}.
\end{align}}
\end{lemma}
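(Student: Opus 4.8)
The plan is to compute the average received signal power over each echo link separately for the user-aided system, exactly as in Lemma~\ref{Lem:ReDiPow}, but now keeping track of the dependence on $d_{\rm UI}$ through the relevant path-loss terms, and then add the two contributions. The key observation is that replacing the IRS controller by a mobile user located at distance $d_{\rm UI}$ from the IRS changes two geometric quantities: the user$\to$IRS-elements distance (which plays the role formerly played by $d_{\rm CI}$) becomes $d_{\rm UI}$, and the user$\to$target distance (formerly $d_{\rm CT}$) becomes $d_{\rm UT}$ as given by \eqref{dct} with $\theta=0$ substituted, i.e., $d_{\rm UT}^2=d_{\rm UI}^2+d_{\rm IT}^2-2d_{\rm UI}d_{\rm IT}\cos(\phi)$ (the paper writes $\phi$ for this reference angle in \eqref{pc}). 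Everything else in the channel model --- the IRS-to-target attenuation $G_{\rm r}$, the target RCS $\kappa$, the number of sensors $M$, the number of reflecting elements $N$, and the DFT-based reflection giving $\mathbf{R}_{\boldsymbol{\varphi}}=\mathbf{I}_N$ --- is unchanged.

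Concretely, I would first invoke Lemma~\ref{Lem:ReDiPow} and simply make the substitutions $d_{\rm CI}\mapsto d_{\rm UI}$ in the expression for $P_{\rm r}$, which yields
\begin{align}
P_{\rm r}(d_{\rm UI})=\frac{NM\eta_{\rm r}\lambda^4\kappa}{1024\pi^5 d_{\rm IT}^4 d_{\rm UI}^2},\nonumber
\end{align}
since the IRS-reflected echo link path gain scales as $1/d_{\rm UI}^2$ through the user$\to$IRS-elements hop while the $1/d_{\rm IT}^4$ round-trip factor through the target is untouched. Next, for the direct echo link I would substitute $d_{\rm CT}^2\mapsto d_{\rm UT}^2 = d_{\rm UI}^2+d_{\rm IT}^2-2d_{\rm UI}d_{\rm IT}\cos(\phi)$ into the expression for $P_{\rm d}$ from Lemma~\ref{Lem:ReDiPow}, giving
\begin{align}
P_{\rm d}(d_{\rm UI})=\frac{M\lambda^2\kappa}{64\pi^3 d_{\rm IT}^2\big(d_{\rm UI}^2+d_{\rm IT}^2-2d_{\rm UI}d_{\rm IT}\cos(\phi)\big)}.\nonumber
\end{align}
Adding $P_{\rm r}(d_{\rm UI})+P_{\rm d}(d_{\rm UI})$ then gives \eqref{pc}. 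The cross terms between the two links vanish in expectation for exactly the same reason as in the proof of Lemma~\ref{Lem:ReDiPow}: the small-scale fading coefficients $\beta_{\rm r}$ and $\beta_{\rm d}$ are zero-mean and independent, so $\mathbb{E}[\beta_{\rm r}\beta_{\rm d}^{\dagger}]=0$, which I would note explicitly to justify that $P_{\rm c}$ is the sum of the two individual link powers with no interference cross-term.

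The only real subtlety --- and the step I would be most careful about --- is the bookkeeping of which distance enters which path-loss factor, i.e., correctly identifying that the user$\to$IRS hop of the IRS-reflected link contributes a $1/d_{\rm UI}^2$ factor while the direct link contributes $1/(d_{\rm UT}^2 d_{\rm IT}^2)$; one must also confirm that the far-field LoS model \eqref{Eq:CI model} and the reflection structure \eqref{Eq:ref}--\eqref{Eq:yt3} still apply verbatim with the user playing the controller's role, so that the $\mathbf{R}_{\boldsymbol{\varphi}}=\mathbf{I}_N$ optimal reflection and hence $\tr(\mathbf{R}_{\boldsymbol{\varphi}}\mathbf{B})=NM$ carry over unchanged. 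Once that mapping is pinned down, the result is an immediate corollary of Lemma~\ref{Lem:ReDiPow} together with the law-of-cosines identity \eqref{dct}, so I expect no genuine obstacle beyond this substitution bookkeeping.
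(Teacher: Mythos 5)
Your proposal is correct and follows essentially the same route as the paper, which states this lemma as an immediate consequence of Lemma~1 and the law-of-cosines relation for $d_{\rm UT}$: substitute $d_{\rm CI}\mapsto d_{\rm UI}$ in $P_{\rm r}$ and $d_{\rm CT}^2\mapsto d_{\rm UI}^2+d_{\rm IT}^2-2d_{\rm UI}d_{\rm IT}\cos(\theta)$ in $P_{\rm d}$, with the cross terms vanishing because $\beta_{\rm r}$ and $\beta_{\rm d}$ are zero-mean and independent. Your explicit remark on the vanishing cross terms is a useful clarification the paper leaves implicit, but it is not a different argument.
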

%\begin{proof}
%The average received signal power at IRS sensors over the combined channel is given by
%\begin{align}\label{P_c}
%&P_{\rm c}(d_{\rm UI})=\frac{1}{T}\parallel \underbrace{[\gamma_{\rm r}\mathbf{a}(\theta )^T\mathrm{diag}({\boldsymbol{\varphi}}[1])
%\mathbf{g}x(1),\cdots,\gamma_{\rm r}\mathbf{a}(\theta )^T\mathrm{diag}({\boldsymbol{\varphi}}[T])
%\mathbf{g}x(T)]}_{\text{Reflect link}}+\underbrace{[\gamma_{\rm d} x (1),\cdots,\gamma_{\rm d} x (T)]}_{\text{Direct link}}\parallel_2^2\nonumber\\
%&=\gamma_{\rm r}\gamma_{\rm r}^{\dagger}N\gamma_{h}^2+\gamma_{\rm d}^2+\frac{1}{T}\gamma_{\rm r}\gamma_{\rm d}^{\dagger}{\mathbf{a}}(\theta )^T\mathrm{diag}(\mathbf{g})\sum_{t=1}^{T}{\boldsymbol{\varphi}}[t]+\frac{1}{T}\gamma_{\rm r}^{\dagger}\gamma_{\rm d}\sum_{t=1}^{T}{\boldsymbol{\varphi}}[t]^H\mathrm{diag}(\mathbf{g})^{\dagger}\mathbf{a}(\phi )^{\dagger}.
%\end{align}
%Substituting the expressions $\gamma_{\rm d}$, $\gamma_{h}$ and $\gamma_{\rm r}$ into \eqref{P_c} and after some calculations, we obtain \eqref{pc}.
%\end{proof}
Then, the effects of $d_{\rm UI}$ on $P_{\rm r}, P_{\rm d}$, and $P_{\rm c}$ are characterized as follows.

\begin{lemma}\label{Lem:Prdc}{
For the user-aided IRS sensing system, as $d_{\rm UI}$ increases, we have
\begin{itemize}
\item $P_{\rm r}(d_{\rm UI})$ monotonically decreases;
\item $P_{\rm d}(d_{\rm UI})$ monotonically increases;
\item $P_{\rm c}(d_{\rm UI})$ first monotonically decreases and then increases.
\end{itemize}}
\end{lemma}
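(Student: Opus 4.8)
\emph{Proof proposal.} The plan is to regard $P_{\rm r}$, $P_{\rm d}$ and $P_{\rm c}$ as explicit scalar functions of the single variable $d_{\rm UI}$ on the open interval $(0,d_{\rm IT})$ and to read off monotonicity from their first (and, for $P_{\rm c}$, second) derivatives. Starting from \eqref{pc}, and using that with $\theta=0$ the device--target distance in \eqref{dct} collapses to $d_{\rm UT}=d_{\rm IT}-d_{\rm UI}$, I would introduce the positive constants $A\triangleq MN\eta_{\rm r}\lambda^{4}\kappa/(1024\pi^{5}d_{\rm IT}^{4})$ and $B\triangleq M\lambda^{2}\kappa/(64\pi^{3}d_{\rm IT}^{2})$ so that
\begin{align}
P_{\rm r}(d_{\rm UI})=\frac{A}{d_{\rm UI}^{2}},\qquad P_{\rm d}(d_{\rm UI})=\frac{B}{(d_{\rm IT}-d_{\rm UI})^{2}},\qquad P_{\rm c}=P_{\rm r}+P_{\rm d}.\nonumber
\end{align}

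The first two bullets are then immediate. For $P_{\rm r}$, we have $P_{\rm r}'(d_{\rm UI})=-2A/d_{\rm UI}^{3}<0$ on $(0,d_{\rm IT})$, so $P_{\rm r}$ is strictly decreasing. For $P_{\rm d}$, on $0<d_{\rm UI}<d_{\rm IT}$ the quantity $d_{\rm IT}-d_{\rm UI}$ is positive and strictly decreasing, hence $P_{\rm d}'(d_{\rm UI})=2B/(d_{\rm IT}-d_{\rm UI})^{3}>0$ and $P_{\rm d}$ is strictly increasing. (If one prefers to keep the general angle in \eqref{pc}, the same argument shows $P_{\rm d}$ increases on the sub-interval where $d_{\rm UT}^{2}=d_{\rm UI}^{2}+d_{\rm IT}^{2}-2d_{\rm UI}d_{\rm IT}\cos\phi$ is decreasing, i.e. $d_{\rm UI}<d_{\rm IT}\cos\phi$; the stated choice $\theta=0$ makes this the whole interval.)

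The only part needing a little care is the ``first decreases then increases'' claim for $P_{\rm c}$. I would establish it via strict convexity: $P_{\rm c}''(d_{\rm UI})=6A/d_{\rm UI}^{4}+6B/(d_{\rm IT}-d_{\rm UI})^{4}>0$ on $(0,d_{\rm IT})$, so $P_{\rm c}'$ is continuous and strictly increasing there. Moreover $P_{\rm c}'(d_{\rm UI})\to-\infty$ as $d_{\rm UI}\to0^{+}$ (the $-2A/d_{\rm UI}^{3}$ term dominates) and $P_{\rm c}'(d_{\rm UI})\to+\infty$ as $d_{\rm UI}\to d_{\rm IT}^{-}$ (the $2B/(d_{\rm IT}-d_{\rm UI})^{3}$ term dominates). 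By the intermediate value theorem and strict monotonicity, $P_{\rm c}'$ has a unique zero $d_{\rm UI}^{\star}\in(0,d_{\rm IT})$, negative to its left and positive to its right; hence $P_{\rm c}$ strictly decreases on $(0,d_{\rm UI}^{\star})$ and strictly increases on $(d_{\rm UI}^{\star},d_{\rm IT})$. As a convenient by-product, $P_{\rm c}'=0$ is equivalent to $A(d_{\rm IT}-d_{\rm UI})^{3}=B\,d_{\rm UI}^{3}$, i.e. $d_{\rm UI}^{\star}=d_{\rm IT}/\bigl(1+(B/A)^{1/3}\bigr)$, which exhibits the unique minimizer in closed form and confirms it lies in $(0,d_{\rm IT})$.

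I do not anticipate a genuine obstacle here; the two points worth being careful about are (i) substituting $d_{\rm UT}$ from \eqref{dct} correctly so the denominator in $P_{\rm d}$ is exactly $(d_{\rm IT}-d_{\rm UI})^{2}$, and (ii) upgrading the two boundary-limit observations on $P_{\rm c}'$ to a genuine unimodality statement — which the convexity of $P_{\rm c}$ (or, equivalently, the explicit expression for $d_{\rm UI}^{\star}$) takes care of.
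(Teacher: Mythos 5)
Your proposal is correct and follows essentially the same route as the paper: the first two bullets from the signs of the individual derivatives, and the third from the sign change of $P_{\rm c}'$ on $(0,d_{\rm IT})$. The only difference is that you justify the single sign change rigorously (via strict convexity of $P_{\rm c}$, the boundary limits of $P_{\rm c}'$, and the closed-form root), whereas the paper simply writes down $P_{\rm c}'$ in \eqref{pcd} and asserts the existence of the threshold $\bar d_{\rm UI}$.
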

\begin{proof}
The first two results on $P_{\rm r}(d_{\rm UI})$ and $P_{\rm d}(d_{\rm UI})$ can be easily obtained, since the device$\to$IRS link distance increases and the device$\to$target link distance decreases.
%  derivative of $P_{\rm r}(d_{\rm UI})$ is
%$$P_{\rm r}(d_{\rm UI})'=\frac{{{-2N\eta_{\rm r}\lambda ^4}\kappa }}{{64*16{\pi ^5}{d_{\rm IT}^4}}}\frac{1}{d_{{\rm CI}}^3}\leq 0,$$
%thus $P_{\rm r}$ is monotonically decreases.
%
%The derivative of $P_{\rm d}$ is
%$$P_{\rm d}(d_{\rm UI})'=\frac{{{2\lambda^2}\kappa }}{{64{\pi ^3}d_{\rm IT}^2}}(d_{\rm IT}-d_{\rm UI})\geq 0,$$
%thus $P_{\rm d}$ is monotonically increases.
%The second derivative of $P_{\rm c}(d_{\rm UI})$ is given by
%\begin{align}\label{fr_de}
%P_{\rm c}(d_{\rm UI})''=\frac{3N\eta_r\kappa\lambda^4}{512 \pi^5d_{\mathrm{IT}}^4d_{\rm UI}^4}+\frac{\kappa\lambda^2(3d_{\rm UI}^2-6d_{\rm UI}d_{\mathrm{IT}}\cos(\phi)+4d_{\mathrm{IT}}^2\cos^2(\phi)-d_{\mathrm{IT}}^2)}{32\pi^3d_{\mathrm{IT}}^2(d_{\mathrm{IT}}^2-2\cos(\phi)d_{\mathrm{IT}}d_{\rm UI}+d_{\rm UI}^2)^3}.
%\end{align}
%Substituting $\phi=0$ into formula \eqref{fr_de}, we have $P_{\rm c}(d_{\rm UI})''>0$. So $P_{\rm c}$ is a convex function.
For the combined link, it can be shown that the first-order derivative of $P_{\rm c}(d_{\rm UI})$ is
%By solving the following equation
\begin{align}\label{pcd}
P_{\rm c}'(d_{\rm UI})=\frac{-2NM\eta_r\lambda^4\kappa(d_{\rm IT}-d_{\rm UI})^3+32M\lambda^2\kappa\pi^2d_{\rm IT}^2d_{\rm UI}^3}{1024\pi^5d_{\rm IT}^4d_{\rm UI}^3(d_{\rm IT}-d_{\rm UI})^3}.
\end{align}
Then, it follows that there exists a ${\bar d}_{\rm UI}$, such that  $P_{\rm c}'(d_{\rm UI})<0$ when  $0<d_{\rm UI}<{\bar d}_{\rm UI}$ and $P_{\rm c}'(d_{\rm UI})\geq 0$ when  ${\bar d}_{\rm UI}\le d_{\rm UI}<d_{\rm UT}$, thus leading to the desired result.
%we can get $d_{\rm UI}$ corresponding to the minimum signal power, which is denoted as $d_{\rm AI,c}$. Herein, $P_{\rm c}(d_{\rm UI})'$ denotes the first derivative of the $P_{\rm c}(d_{\rm UI})$ with respect to $d_{\rm UI}$.
%Thus $P_{\rm c}(d_{\rm UI})$ first monotonically increases in $[0, d_{\rm AI,c}]$ and then decreases in $[d_{\rm AI,c}, +\infty]$.
\end{proof}

Lemma~\ref{Lem:Prdc} shows that when the mobile device locates closer to the IRS as compared to the target, the IRS-reflected echo link yields a much larger power than the direct echo link in the average due to the prominent IRS passive beamforming gain. In contrast, when the mobile user gets closer to the target, the channel gain of the direct echo link is greatly enhanced due to the smaller path-loss from the user to the target, while the IRS-reflected echo link is severely attenuated.
\begin{example}\label{Exa:P_MSE}{In Figs.~\ref{power} and ~\ref{PL}, we plot the DOA estimation MSE by the MUSIC algorithm and average received signal power at the IRS sensors, respectively.
%When direct link power and reflect link power cumulated at sensors are equal, the closed expression of $d_{\mathrm{CI}}$ is given by
%\begin{align}\label{er}
%d_{\mathrm{CI,c}}=\frac{d_{\mathrm{IT}}}{\frac{4\pi d_{\mathrm{IT}}}{\sqrt{N\eta_r}\lambda}+1}.
%\end{align}
%In Fig.~\ref{power},
We set $d_{\mathrm{IT}}=30$ m, $N=64$.
% and $d_{\mathrm{CI}}=0.47$.
 It is observed that for the case with $d_{\rm UI}=0.47$ m, the direct echo link and IRS-reflected echo link achieve the same signal power at the sensors and thus the same MSE performance, which is consistent with Lemmas~\ref{Lem:ReDiPow} and \ref{Lem:Nth}.
In addition, it is observed that as $d_{\rm UI}$ increases, $P_{\rm c}$ first decreases and then increases.
% This is because the received power of the reflecting link
In contrast, the DOA estimation MSE  first increases and then decreases with $d_{\rm UI}$. By letting $P_{\rm c}'(d_{\rm UI})=0$ (see \eqref{pcd}), we obtain that $P_{\rm c}(d_{\rm UI})$ achieves its minimum value when $d_{\rm UI}=1.7852$ m, which also achieves the maximum MSE.}
% is matched with the maximum MSE point. Note that $d_{\rm AI,c}$ is close to $d_{\rm CI,c}$.
\end{example}

\begin{figure}[t]
\centering
\includegraphics[width=75mm]{./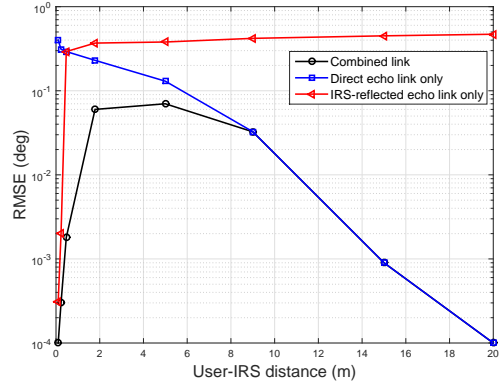}
\caption{Effect of $d_{\rm UI}$ on the DOA estimation MSE.}
\label{power}
\end{figure}

\begin{figure}[t]
\centering
\includegraphics[width=75mm]{./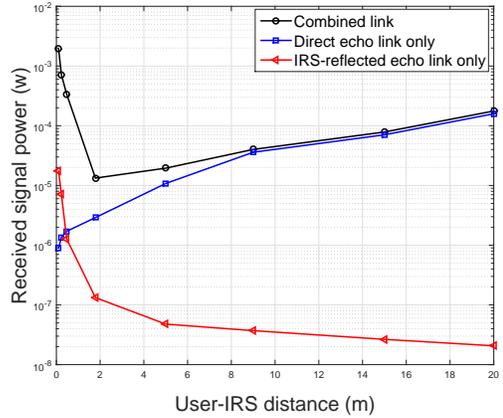}
\caption{Effect of $d_{\rm UI}$ on the received signal power.}
\label{PL}
\end{figure}

\begin{remark}{
%{\color{blue}When the signal transmitter is placed near the IRS or near the target, the system has a higher detection accuracy, and then it can sense a longer distance. However, the device cannot guarantee that it will always be near the target, but the controller can always guarantee that it is near the IRS.}
Example~\ref{Exa:P_MSE} shows that it is beneficial to employ the helping mobile device near either the IRS or the target for target localization, such that the DOA estimation MSE is minimized. Note that the former case exploits the IRS passive beamforming gain, while the latter case mainly relies on the short-distance user-target link for DOA estimation. However, in practice, since the availability of such helping user is random and so is its location, the estimation performance of the user-aided IRS sensing system is not guaranteed. In contrast, using the IRS controller as the probing signal transmitter (which is equivalent to the case of short user-IRS distance in Example 1) avoids this issue and thus offers a more predictable estimation performance than the user-aided IRS sensing system.}
%Moreover, the target localization in the anchor-based IRS sensing requires the additional information of the anchor location, which introduces
\end{remark}
\subsection{Cramer-Rao Bound}
Since the DOA estimation MSE of the MUSIC algorithm is difficult to obtain, we analyze in this subsection the CRB of the proposed IRS self-sensing system that characterizes a lower bound of the DOA estimation MSE. Moreover, we gain useful insights into the effects of the number of IRS reflecting elements or sensors on the estimation performance.

%People aim to find the optimal estimator that minimizes the Mean Square Error (MSE)of DOA estimation. However,
%such an optimal estimator is often difficult to construct and
%the minimum MSE (MMSE) is normally hard to characterize.
%To evaluate the performance of the DOA estimation, To gain more insights on the performance limits and the effect of $M$ and $N$ on estimation performance,
%we consider the well know Cramer-Rao Bound (CRB) that characterizes the lower bound of MSE estimation.

First, we obtain the following result of the CRB for the target DOA estimation accuracy in the proposed IRS self-sensing system.

\begin{theorem}\label{Theorem1}{For the proposed IRS self-sensing system, the CRB of the target DOA estimation MSE is given by \eqref{theorem1} at the top of next page,
\begin{figure*}[t]
\begin{align}\label{theorem1}
%\!\!\!\!\!\!\!\mathrm{CRB}(\theta )\!=\!\frac{1}{2\frac{T|\xi\alpha_{\rm{CI}}\eta_{\rm r}|^2}{\sigma^2}\!\left(\!w_1\frac{M^3-M}{12}\cos^2(\theta)N\!+\!\pi^2w_2\frac{N^3-N}{12}\cos^2(\theta)M\!+\!w_1\frac{M^3-M}{12}\cos^2(\theta)\frac{1}{\alpha_{\rm CI}\alpha_{\rm CI}^*\eta_{\rm r}\eta_{\rm r}^*}\!\right)\!},\\
\mathrm{CRB}(\theta )\!=\!\frac{1/2}{\frac{T|\xi\alpha_{\rm{CI}}\eta_{\rm r}|^2}{\sigma^2}\!\left(\!w_2\frac{N^3-N}{6}\varsigma^2(\theta)M+\!w_1\frac{M^3-M}{6}\cos^2(\theta)N\!+ \!w_1\frac{(M^3-M)}{6}\frac{\cos^2(\theta)}{\alpha_{\rm CI}\alpha_{\rm CI}^{\dagger}\eta_{\rm r}\eta_{\rm r}^{\dagger}}\!+\!w_1\frac{(M^3-M)}{6}\cos^2(\theta)p(\theta)\r)\!}.\!\!
\end{align}
\hrulefill
\end{figure*}
where
\begin{align*}
w_1\!&=\!\frac{\pi^2d_{\rm s}^2}{\lambda^2},~~w_2\!=\!\frac{\pi^2d_{\rm I}^2}{\lambda^2},~~\xi\!=\!\max(\alpha_{\rm r},\alpha_{\rm d}),\nn \\
{\varsigma}(\theta)\!&=\!\cos(\theta)\sin(\theta_{\rm IT,v})\!+\sin(\theta_{\rm CI, h})\sin(\theta_{\rm CI, v}),\nn \\
p(\theta)\!&=\!\Re\left\{\frac{2}{T\alpha_{\rm CI}^{\dagger}\eta_{\rm r}^{\dagger}}\sum_{t=1}^{T}\!\left({{\boldsymbol{\varphi}}[t]_1e^{j(N-1)\pi\frac{{\rm d_I}}{\lambda}\bar{\varsigma}(\theta)}}+\cdots \right.\right.\nn\\
&\left.\left.+{{\boldsymbol{\varphi}}[t]_Ne^{-j(N-1)\pi\frac{{\rm d_I}}{\lambda}\bar{\varsigma}(\theta)}}\!\right)\!\right\},\nn\\
\bar{\varsigma}(\theta)&=\sin(\theta)\sin(\theta_{\rm IT,v})\!+\sin(\theta_{\rm CI, h})\sin(\theta_{\rm CI, v}),
\end{align*}
with ${\boldsymbol{\varphi}}[t]_n$ being the $n$th element of vector ${\boldsymbol{\varphi}}[t]$.}
\end{theorem}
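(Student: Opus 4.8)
The plan is to derive the CRB via the standard Fisher information machinery for a deterministic-parameter estimation problem with complex Gaussian noise, applied to the stacked observation model \eqref{ref_mat}. Writing the noise-free mean as $\boldsymbol{\mu}(\theta)\triangleq \mathrm{vec}\big({\bf b}(\theta){\bf f}^H\big)$ with ${\bf z}[t]\sim\mathcal{CN}(\mathbf{0},\sigma^2\mathbf{I})$ i.i.d.\ across snapshots, the Fisher information for $\theta$ is $\mathrm{FI}(\theta)=\frac{2}{\sigma^2}\Re\big\{\dot{\boldsymbol{\mu}}^H(\theta)\dot{\boldsymbol{\mu}}(\theta)\big\}$ and $\mathrm{CRB}(\theta)=1/\mathrm{FI}(\theta)$, which already explains the leading $1/2$ in the numerator and the $\sigma^2$ in the denominator of \eqref{theorem1}. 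The first step is thus to compute $\dot{\boldsymbol{\mu}}(\theta)$, i.e.\ to differentiate ${\bf y}[t]={\bf b}(\theta)f({\boldsymbol{\varphi}}[t])+{\bf z}[t]$ through both the receive steering vector ${\bf b}(\theta)$ and the scalar $f({\boldsymbol{\varphi}}[t])=\gamma_{\rm r}{\bf q}^T(\theta){\boldsymbol{\varphi}}[t]+\alpha_{\rm d}$ in \eqref{fvar}, using the product rule:
$\dot{\bf y}[t]=\dot{\bf b}(\theta)f({\boldsymbol{\varphi}}[t])+{\bf b}(\theta)\dot f({\boldsymbol{\varphi}}[t])$, where $\dot f$ involves $\dot{\bf q}^T(\theta){\boldsymbol{\varphi}}[t]$ and, crucially, the dependence of the spatial frequency $\tilde\phi_{\rm IT,h}=\phi_{\rm IT,h}+\phi_{\rm CI,h}$ on $\theta$ through $\phi_{\rm IT,h}=\frac{2d_{\rm I}}{\lambda}\sin(\theta)\sin(\theta_{\rm IT,v})$, which is the origin of $\varsigma(\theta)$ and $\bar\varsigma(\theta)$.

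Next I would accumulate $\mathrm{FI}(\theta)=\frac{2}{\sigma^2}\sum_{t=1}^{T}\Re\{\|\dot{\bf y}[t]\|^2\}$ and expand the square into three groups of terms: (i) the $\dot{\bf b}$--$\dot{\bf b}$ term, which after using $\dot{\bf b}^H\dot{\bf b}=\big(\frac{\pi d_{\rm s}}{\lambda}\cos\theta\big)^2\sum_m(\text{array indices})^2=w_1\cos^2(\theta)\frac{M^3-M}{6}$ (the standard $\sum$ of squared symmetric ULA indices) and $|f({\boldsymbol{\varphi}}[t])|^2$ summed over $t$, produces the last two terms in the denominator of \eqref{theorem1} — one proportional to $N$ from the $|\gamma_{\rm r}{\bf q}^T{\boldsymbol{\varphi}}[t]|^2$ piece with $\mathbf{R}_{\boldsymbol\varphi}=\mathbf{I}_N$, one the $1/(\alpha_{\rm CI}\alpha_{\rm CI}^\dagger\eta_{\rm r}\eta_{\rm r}^\dagger)$ piece from $|\alpha_{\rm d}|^2$ normalized by $|\gamma_{\rm r}|^2$-scaling, and the cross term giving $p(\theta)$; (ii) the ${\bf b}$--${\bf b}$ term with $\|{\bf b}(\theta)\|^2=M$ multiplying $|\dot f|^2$, whose dominant part is $|\gamma_{\rm r}|^2|\dot{\bf q}^T(\theta){\boldsymbol{\varphi}}[t]|^2$; using $\dot{\bf q}$'s symmetric-index structure and again $\mathbf{R}_{\boldsymbol\varphi}=\mathbf{I}_N$ this averages to $w_2\varsigma^2(\theta)\frac{N^3-N}{6}M$, the first term in the denominator; (iii) the $\dot{\bf b}$--${\bf b}$ cross term, which contributes to $p(\theta)$ as well. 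Collecting and factoring out the common $\frac{T|\xi\alpha_{\rm CI}\eta_{\rm r}|^2}{\sigma^2}$ (with $\xi=\max(\alpha_{\rm r},\alpha_{\rm d})$ absorbing the path-gain magnitudes and $\beta_{\rm r},\beta_{\rm d}$ treated as the realized small-scale coefficients) yields exactly \eqref{theorem1}.

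The main obstacle I anticipate is bookkeeping the cross terms and the $\theta$-dependence correctly: both ${\bf b}(\theta)$ and ${\bf q}(\theta)$ depend on $\theta$ (through $\frac{2d_{\rm s}}{\lambda}\sin\theta$ and through $\tilde\phi_{\rm IT,h}$ respectively), so $\dot{\boldsymbol\mu}$ has two genuinely different contributions whose relative phase — encoded through $\alpha_{\rm d}/(\gamma_{\rm r})=\alpha_{\rm d}/(\alpha_{\rm r}\alpha_{\rm CI}\eta_{\rm r})$ and the per-snapshot reflection coefficients ${\boldsymbol\varphi}[t]_n$ — does not vanish on averaging and therefore survives as the explicit sum $p(\theta)$; one must also be careful that the DFT choice \eqref{the} makes $\mathbf{R}_{\boldsymbol\varphi}=\mathbf{I}_N$ but does \emph{not} kill $\sum_t{\boldsymbol\varphi}[t]$, which is precisely why $p(\theta)$ appears rather than a cleaner closed form. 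A secondary care point is the treatment of the random gains $\beta_{\rm r},\beta_{\rm d}$: the CRB here is conditional on a channel realization (the theorem statement keeps $\alpha_{\rm r},\alpha_{\rm d},\eta_{\rm r}$ explicit), so no expectation over small-scale fading is taken at this stage, unlike in Lemma~\ref{Lem:ReDiPow}. Once the FIM is assembled, inverting the scalar FIM is immediate and the result follows.
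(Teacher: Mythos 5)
Your overall machinery --- the deterministic complex-Gaussian CRB obtained by differentiating the noise-free mean and forming $\frac{2}{\sigma^2}\Re\{\dot{\boldsymbol{\mu}}^H(\theta)\dot{\boldsymbol{\mu}}(\theta)\}$ --- is the same engine the paper uses, and your accounting of where the four denominator terms of \eqref{theorem1} originate (the $\dot{\bf b}$--$\dot{\bf b}$, ${\bf b}$--${\bf b}$ and cross contributions, with $\mathbf{R}_{\boldsymbol{\varphi}}=\mathbf{I}_N$ and the symmetric-index sums giving $(M^3-M)/6$ and $(N^3-N)/6$) matches the paper's computation. But there are two concrete gaps. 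First, the paper does not treat the complex amplitude as known: it estimates $\boldsymbol{\varpi}=[\theta,\Re\{\xi\},\Im\{\xi\}]^T$ jointly and uses the Schur-complement form ${\rm CRB}(\theta)=[f_{\theta\theta}-\mathbf{f}_{\theta\bar{\boldsymbol{\xi}}}\mathbf{f}_{\bar{\boldsymbol{\xi}}\bar{\boldsymbol{\xi}}}^{-1}\mathbf{f}^T_{\theta\bar{\boldsymbol{\xi}}}]^{-1}$, which produces the subtracted term $M|{\mathbf{q}}^H(\theta)\mathbf{R}^T_{\rm c}\dot{\mathbf{q}}(\theta)|^2/({\bf q}^H(\theta)\mathbf{R}^T_{\rm c}{\bf q}(\theta))$ in \eqref{CRB1}. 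That term happens to vanish here (see \eqref{g1}) only because the steering vectors are indexed symmetrically about the array centroid so that ${\bf q}^H(\theta)\dot{\bf q}(\theta)={\bf b}^H(\theta)\dot{\bf b}(\theta)=0$; a known-amplitude CRB is in general a strictly smaller bound, so you would need to verify this cancellation before claiming agreement with \eqref{theorem1}.

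Second, and more important for actually landing on the stated formula: in the exact model \eqref{Eq:yt3} the reflected and direct links carry different complex gains $\gamma_{\rm r}=\alpha_{\rm r}\alpha_{\rm CI}\eta_{\rm r}$ and $\alpha_{\rm d}$, so the exact Fisher information has $|\alpha_{\rm r}|^2$ and $|\alpha_{\rm d}|^2$ multiplying different terms and a single prefactor $|\xi\alpha_{\rm CI}\eta_{\rm r}|^2$ cannot be factored out. The paper gets around this by introducing the auxiliary vector $\mathbf{c}$ with $\alpha_{\rm CI}\eta_{\rm r}{\bf q}^T(\theta)\mathbf{c}=1$ and then making the explicit approximation \eqref{appro_y} with $\xi=\max(\alpha_{\rm r},\alpha_{\rm d})$, which folds the direct link into a single effective waveform covariance $\mathbf{R}_{\rm c}$ in \eqref{rc}; the $1/(\alpha_{\rm CI}\alpha_{\rm CI}^{\dagger}\eta_{\rm r}\eta_{\rm r}^{\dagger})$ term and $p(\theta)$ then emerge from ${\bf q}^H(\theta)\mathbf{R}^T_{\rm c}{\bf q}(\theta)$ as the $\mathbf{c}\mathbf{c}^H$ and ${\boldsymbol{\varphi}}[t]\mathbf{c}^H$ pieces. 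Your remark that $\xi$ ``absorbs the path-gain magnitudes'' is doing real, unproven work there: carried out exactly, your derivation terminates at a formula with separate $|\gamma_{\rm r}|^2$ and $|\alpha_{\rm d}|^2$ coefficients, which is arguably more faithful to the model but is not \eqref{theorem1}.
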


\begin{proof}
Please refer to Appendix A.
\end{proof}

\begin{remark}
Theorem~\ref{Theorem1} shows  that the CRB of the DOA estimation MSE decreases with both the number of IRS sensors, $M$, and that of IRS reflecting elements, $N$. Specifically, in the denominator of \eqref{theorem1}, the first two terms show the effects of the IRS-reflected echo link on the CRB performance, while the third term (i.e., $\frac{T|\xi|^2}{\sigma^2}(\!w_1\frac{M^3-M}{6}\cos^2(\theta))$) shows the effect of the direct echo link. The last term shows the effect of the IRS-reflected echo link and the direct echo link.
Such results reveal that when $|\alpha_{\rm{CI}}\eta_{\rm r}|^2$ is small, a large number of IRS reflecting elements (i.e., a large $N$) is needed to compensate the severe path-loss due to the two-hop signal reflections in the IRS-reflected echo link. On the other hand, increasing the number of IRS sensors can improve the performance as it is beneficial for boosting the estimation SNR of the received signals over both the direct and IRS-reflected echo links.
%Moreover, increasing the number of sensors is easier to improve the accuracy of estimation, compared to increasing the number of IRS elements. This is due to the fact that,
%in \eqref{theorem1}, in addition to the term introduced by direct link, $w_1\frac{M^3-M}{12}\cos^2(\theta)N+\pi^2w_2\frac{N^3-N}{12}\cos^2(\theta)M$, the direct link introduced another term $w_1\frac{M^3-M}{12}\cos^2(\theta)\frac{1}{\alpha_{\rm CI}\alpha_{\rm CI}^*\eta_{\rm r}\eta_{\rm r}^*}$, which is only related to the number of sensors $M$. In this context, a small number of sensors can
%be set for sensing, as will be verified by simulations in
%Section V, where the DOA detection error can approach
%zero with the existence of limited sensors.
\end{remark}

\section{Numerical Results}
\begin{table*}[!t]
\caption{{Simulation parameters}}
\label{Table1}
\centering
\begin{tabular}{|c|c|}
\hline
{\bf{Parameter}} & {\bf{Value}} \\
\hline
Number of IRS reflecting elements & $N=64$ \\
\hline
Number of IRS sensors & $M=8$ \\
\hline
Distance from IRS to target & $d_{\rm IT}=30$ m \\
\hline
Angle of target w.r.t. IRS  & $\theta=60^{\circ}$ \\
\hline
Target RCS & $\kappa=7$ dBsm \\
\hline
Noise power & $-109$ dBm \\
\hline
Wavelength & $\lambda=0.2$ m \\
\hline
Number of BS transmit antennas & $64$ \\
\hline
Number of BS receive antennas & $8$ \\
\hline
Distance from IRS to BS & $d_{\rm BI}=100$ m \\
\hline
Angle of BS w.r.t. IRS & $\theta_{\rm I}=80^{\circ}$ \\
\hline
Angle of IRS w.r.t. BS & $\theta_{\rm B}=80^{\circ}$ \\
\hline
Distance from IRS controller to reflecting elements & $d_{\rm CI}=0.5$ m \\
\hline
Successful angle estimation threshold & $\delta=0.01$ \\
\hline
Number of snapshots & $T=64$ \\
\hline
Angle of device w.r.t. IRS & $60^{\circ}$ \\
\hline
Distance from helping user to IRS & Uniformly distributed in $[0.5,100]$ m or $[0.5,135]$ m\\
\hline
\end{tabular}
\end{table*}
\begin{figure*}[!t]
  \centering
  \subfigure[BTB scheme]{
    \label{fig:subfig:a} %% label for first subfigure
    \includegraphics[width=2.3in]{./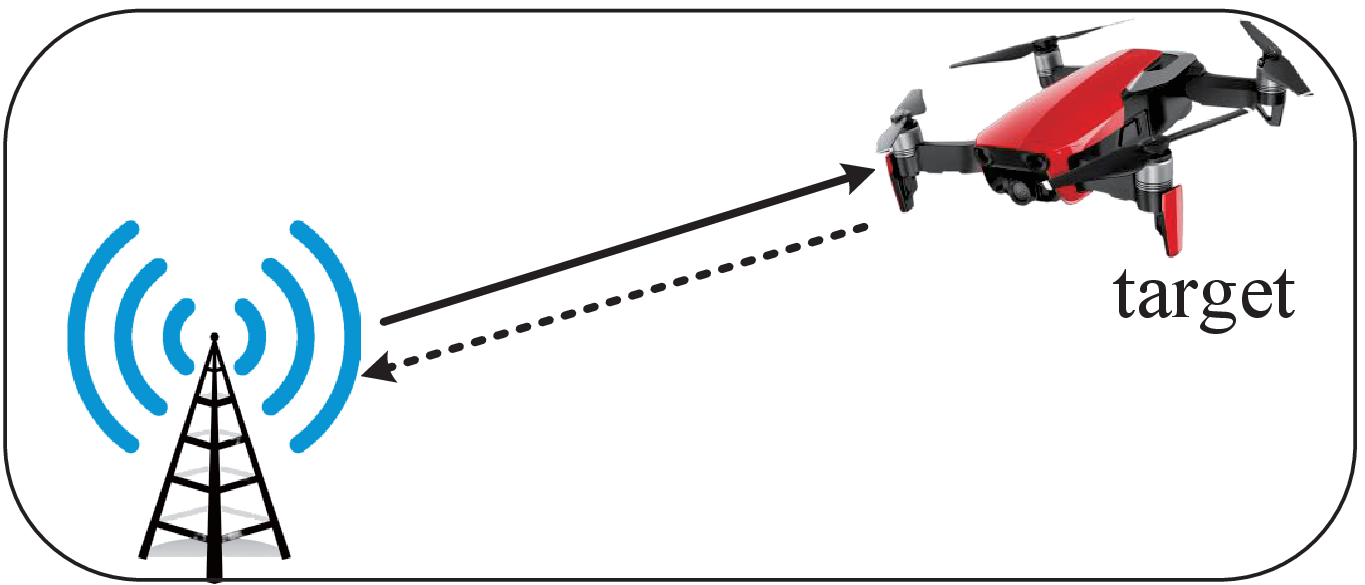}}
  \hspace{1in}
  \subfigure[BITS scheme]{
    \label{fig:subfig:b} %% label for second subfigure
    \includegraphics[width=2.3in]{./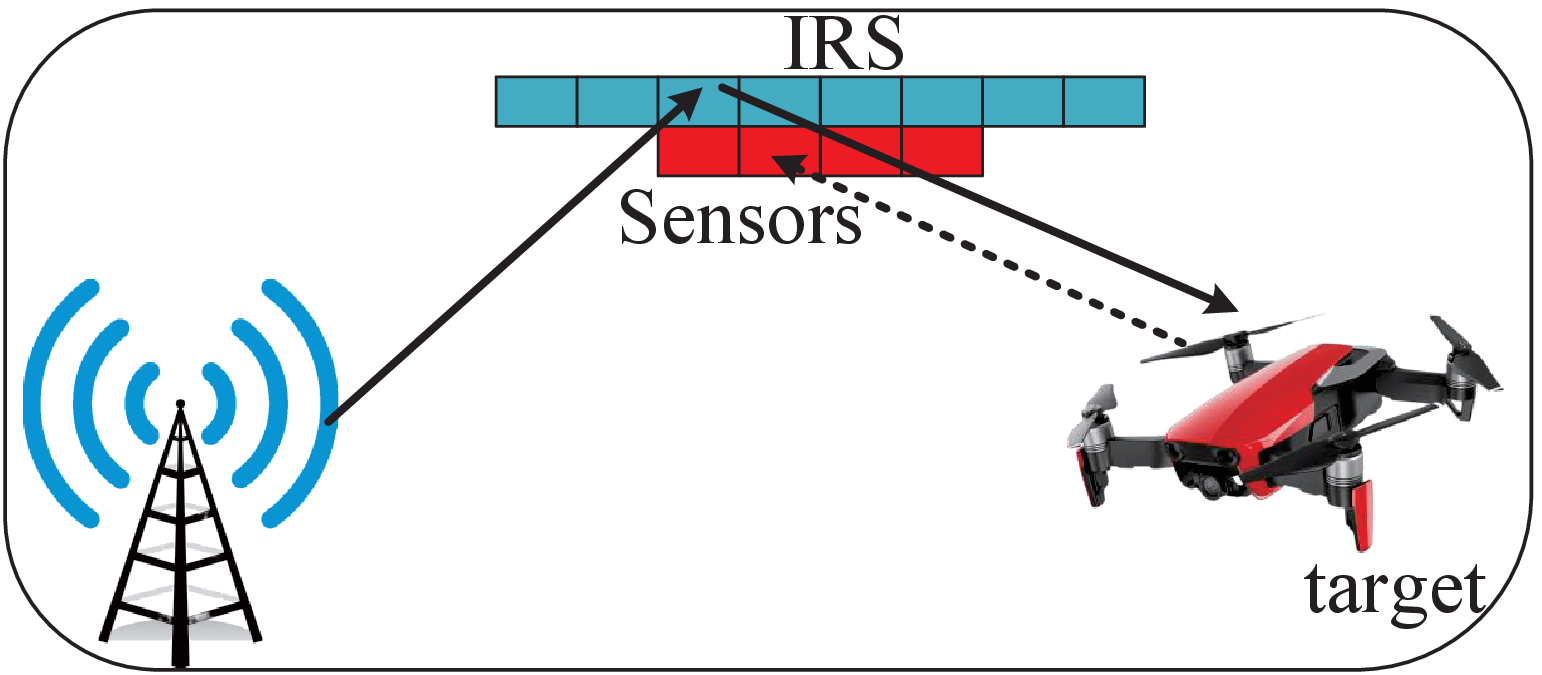}}
  \label{fig:subfig} %% label for entire figure
    \hspace{1in}\\
  \subfigure[BTS scheme]{
    \label{fig:subfig:b} %% label for second subfigure
    \includegraphics[width=2.3in]{./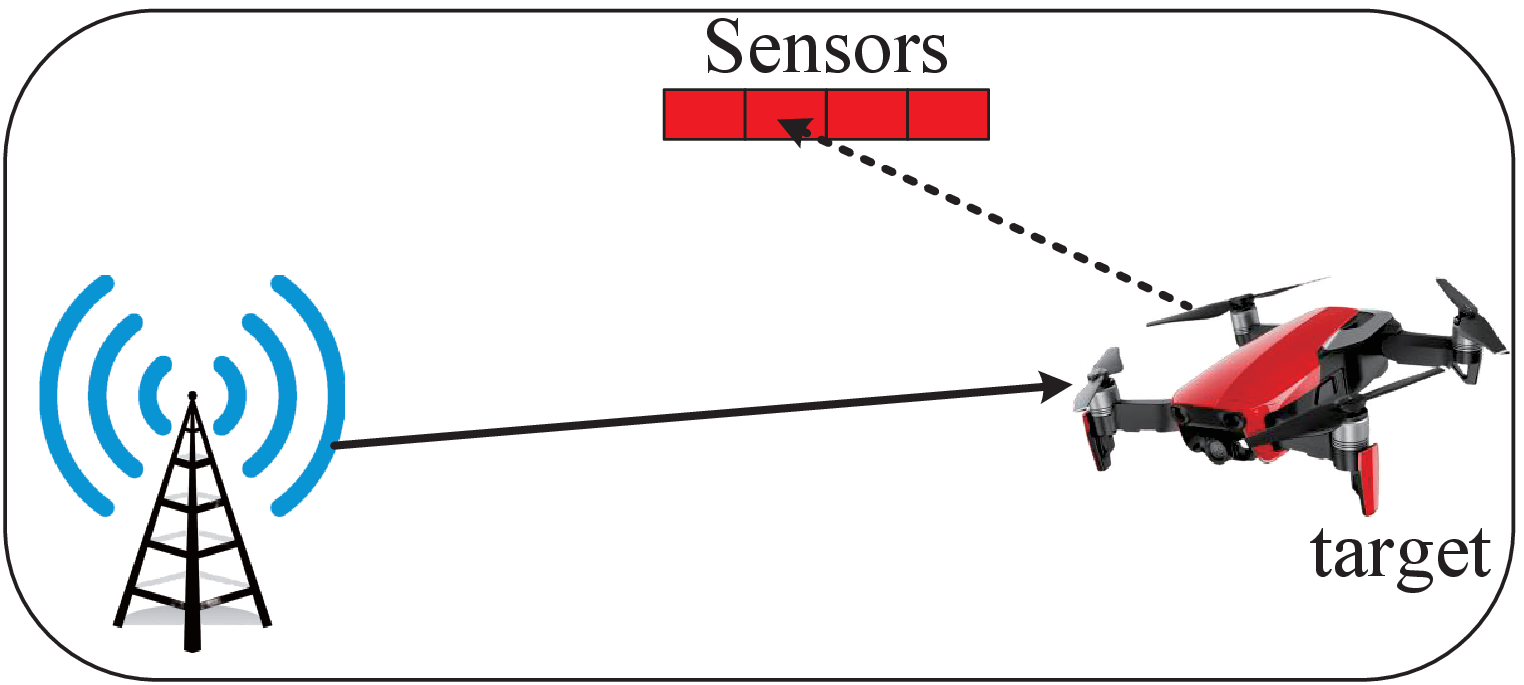}}
  \label{fig:subfig} %% label for entire figure
  \hspace{0.001in}
  \subfigure[BITIB scheme]{
    \label{fig:subfig:b} %% label for second subfigure
    \includegraphics[width=2.3in]{./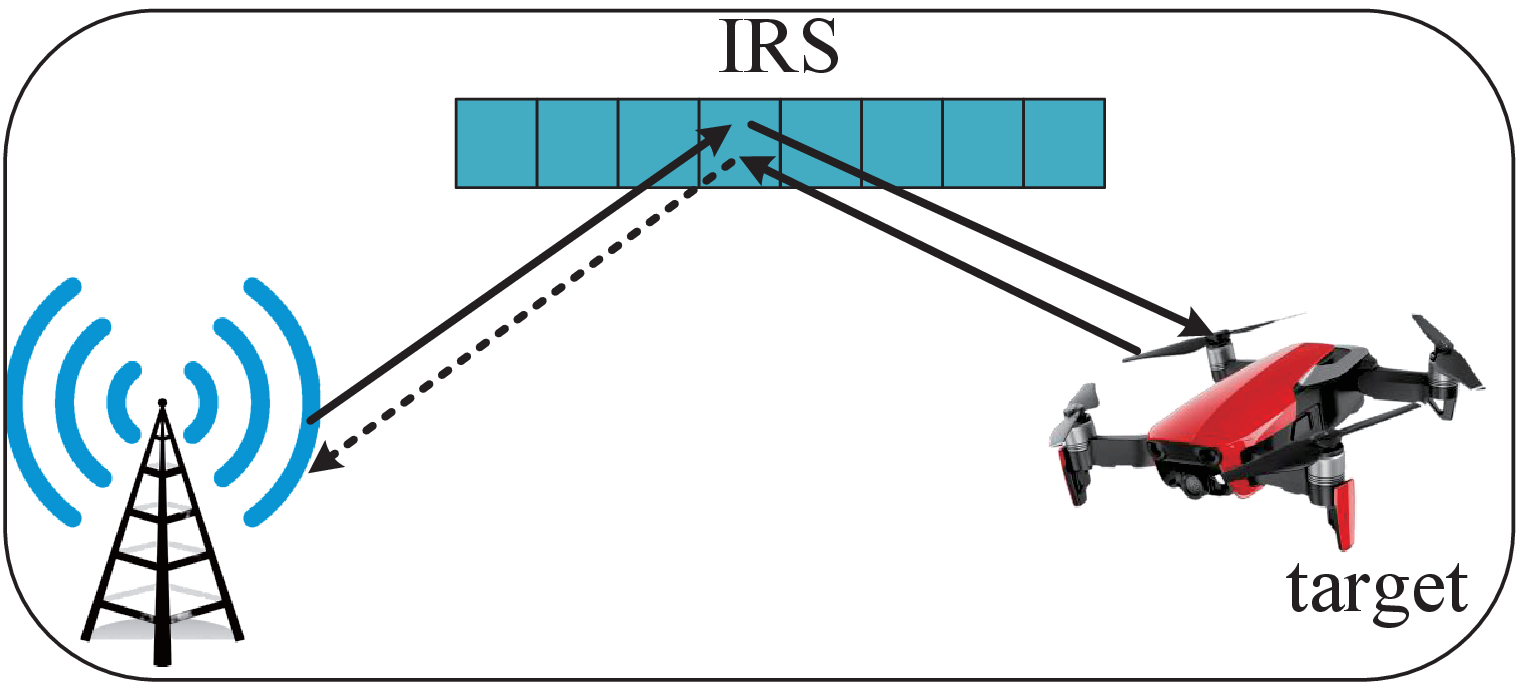}}
  \label{fig:subfig} %% label for entire figure
     \hspace{0.001in}
       \subfigure[Mobile-user aided scheme (MUS)]{
    \label{fig:subfig:b} %% label for second subfigure
    \includegraphics[width=1.7in]{./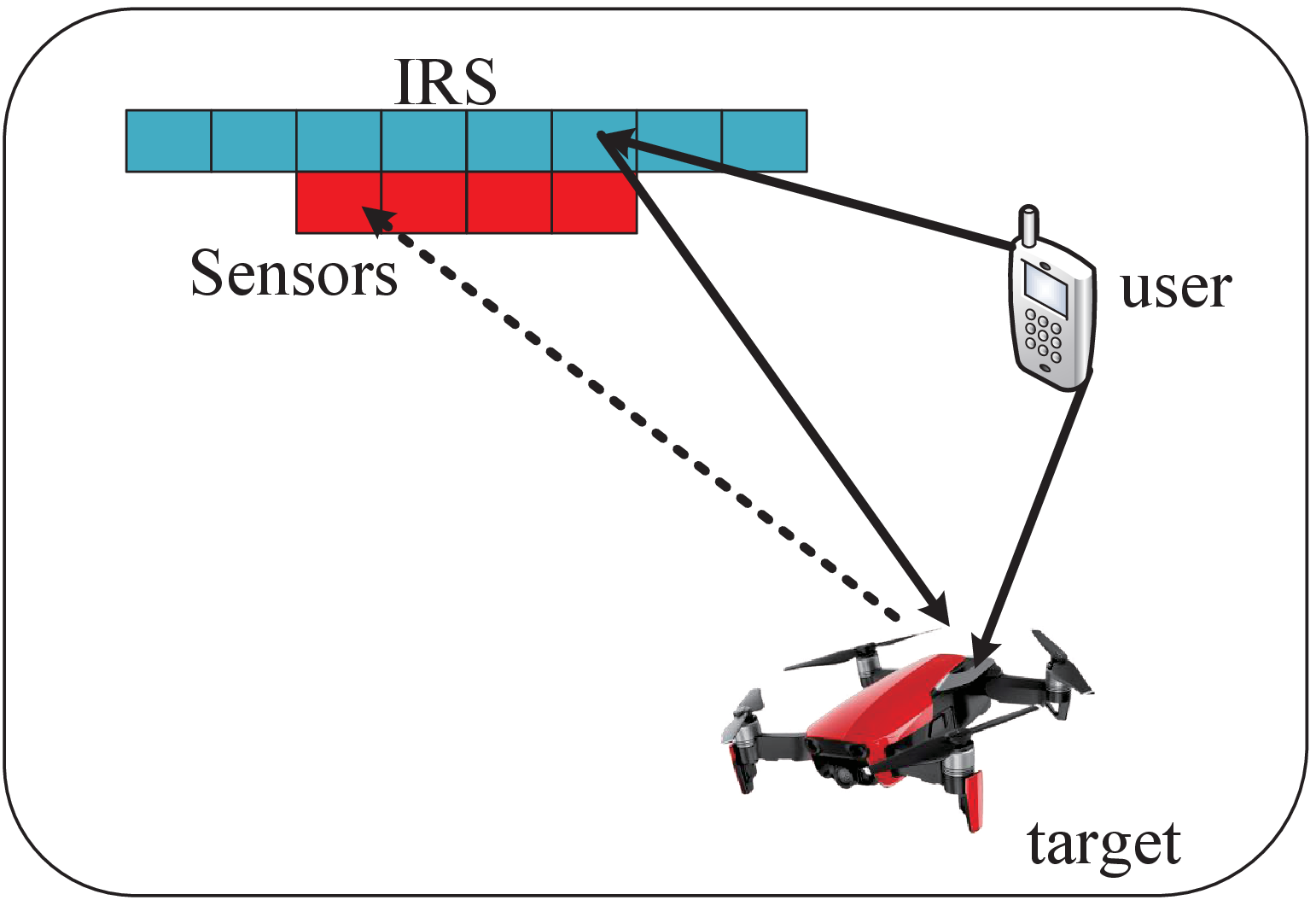}}
  \label{fig:subfig} %% label for entire figure
    \caption{Different benchmark sensing systems with/without IRS.}
    \label{comparision_scene}
\end{figure*}

In this section, we present numerical results to evaluate the performance of the proposed IRS-self-sensing system. For simplicity, we consider a ULA IRS as shown in Fig.~\ref{Simulation1}, where the IRS is equipped with $N=64$ reflecting elements and $M=8$ sensors.
The distance between the IRS and target is $d_{\rm IT}=30$ m, the azimuth angle of the target w.r.t. the IRS is set as $\theta=60^{\circ}$. Moreover, for the benchmark schemes detailed in the sequel,
%sensing performance comparison,
we consider there is a BS equipped with $64$ transmit antennas and $8$ receive antennas, where the distance between the IRS and BS is set as $d_{\rm BI}=100$ m.
%the azimuth angle of the BS is $\theta_{\rm I}=80^{\circ}$, and the azimuth angle of the IRS to the BS is set to be $\theta_{\rm B}=10^{\circ}$.
The other simulation parameters are set according to Table~\ref{Table1}, unless stated otherwise. It is worth noting that the IRS controller in our considered system is practically located in the far-field of `each IRS element' due to its small size as composed to the distance with the controller.
The DOA estimation performance  is evaluated in terms of  root mean square error (RMSE) and probability of successful estimation. Specifically, RMSE is defined as $\textrm{RMSE} \triangleq \sqrt{\mathbb{E}[(\theta-\hat{\theta})^2]}$, where $\hat{\theta}$ is the estimation of $\theta$. A target is said to be localized successfully in a given trial if $|\hat{\theta}-\theta|\leq \delta$ with a tunable constant $\delta$ \cite{succ1,succ2}. Herein, $\delta$ is set to be $0.01$.

We consider the following benchmark schemes that include the traditional radar system and the schemes that apply IRS in different ways to form the echo links to estimate the DOA of the target.
\begin{itemize}
\item[1)] BTB scheme: As illustrated in Fig.~\ref{comparision_scene}(a), the BS transmits and receives probing signals for target sensing; thus the echo link is BS$\to$target$\to$BS.
\item[2)] BITS scheme: As illustrated in Fig.~\ref{comparision_scene}(b), the BS transmits probing signals, which are reflected by the IRS and received by its sensors; thus the echo link is BS$\to$IRS elements$\to$target$\to$ IRS sensors.
\item[3)] BTS scheme: As illustrated in Fig.~\ref{comparision_scene}(c), the BS transmits probing signals, which are reflected by the target and received by its sensors; thus the echo link is BS$\to$target$\to$ IRS sensors.
\item[4)]  BITIB scheme:  As illustrated in Fig.~\ref{comparision_scene}(d), the BS transmits probing signals, which are consecutively reflected over the BS$\to$IRS elements$\to$target$\to$IRS elements$\to$BS echo link.
\item[5)] Mobile-user aided  scheme (MUS): This scheme, as illustrated in Fig.~\ref{comparision_scene}(e) and previously considered in Example 1, is similar to our  proposed scheme, while except that the probing signals are sent by a mobile user (instead of the IRS controller) whose location is random in practice.
% , the IRS controller is replaced by the
\end{itemize}
Note that for the  BITIB scheme, the DOA of target is estimated by the on-grid beam training method due to the lack of IRS sensors; while the MUSIC algorithm is applied to estimate the target DOA in the other benchmark schemes.
%introduce five sensing methods to compare with our proposed IRS aided sensing method. As shown in , the first comparison method uses the BS to transmit and receive signal . The flow of the signal is from the BS to the target and then to the BS (BTB).
%
% The second comparison method uses the BS to transmit signal, which is reflected by the IRS and received by the sensors. The flow of the signal is from the BS to the IRS, and then to the target and finally to the sensors (BITS).
%
%  The third comparison method also uses the BS to transmit signal, which is reflected by the target and received by the sensors. The flow of the signal is from the BS to the target and finally to the sensors (BTS).
%
% The fourth comparison method uses the BS to transmit signal, which is reflected by the IRS twice and received by the BS. The flow of the signal is from the BS to the IRS to the target, and then to the IRS and finally to the BS (BITIB).
%
% The final comparison method uses the device to transmit signal. The signal is reflected by the IRS to the target and is also directly transmitted to the target. Both two signals transmitted to the target are received by the sensors. The flow of the signal is from the device to the IRS to the target then to the sensors, and is also from the device to the target then to the sensors (MUS).

In Fig. \ref{Fig1}, we show the RMSE of the proposed
IRS self-sensing scheme versus BS transmit power. It is observed that the proposed scheme exhibits
superior performance than the other benchmark schemes in the entire transmit power range.
%Such performance improvement mainly
%benefits from the extended DOFs of the IRS.
This is because, in the proposed IRS self-sensing scheme, the IRS is placed near the target with its controller and sensors for  transmitting and receiving signals, respectively. As such, the signal traveling distance from the transmitter to the target and then to the receiver is much shorter than that of the BTB, BITS, BTS, and  BITIB schemes. Note that in these four benchmark schemes, the signal is sent by the far-away BS, thus suffering severe path-loss. Moreover, the proposed IRS scheme can receive signals over both the direct and IRS-reflected echo links at the same time. As stated in Lemma 3, the proposed scheme with the combined links performs better than those based on the individual link only. Thus, the proposed IRS-self-sensing scheme that  exploits both the IRS passive beamforming gain and the direct echo link gain for target DOA estimation is practically appealing.
Besides, the proposed scheme substantially improves the RMSE performance as compared to
the MUS scheme, since the latter scheme
cannot guarantee that there always exists a helping device near the IRS/target.
\begin{figure}[t]
\centering
\includegraphics[width=85mm]{./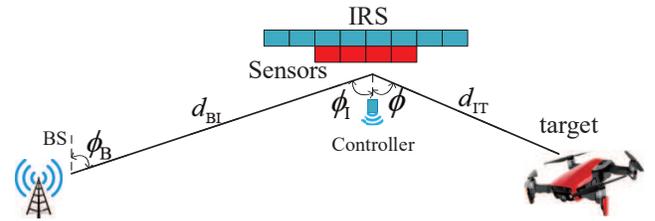}
\caption{{Simulation setup.}}
\label{Simulation1}
\end{figure}

In Fig.~\ref{Fig2}, we evaluate the probability of successful estimation of the proposed scheme versus the transmit power against the other benchmark schemes. It is observed that the proposed scheme attains a much higher successful estimation probability than the other schemes. Moreover, the successful estimation probability of all schemes increases with the transmit power due to the increased received SNR at the BS/sensors. It is observed that when the distance from helping user to IRS is short, e.g., [0.5, 100] m,
the proposed scheme performs worse than the MUS scheme in terms of probability of successful estimation in the low transmit power region, while it attains better performance when the transmit power increases. The reason is that when the transmit power is small, our proposed IRS self-sensing system has a low probability of successful estimation due to the low receive SNR at the sensors; while the MUS scheme has the potential to attain a higher probability of successful estimation, since the user has some likelihood to be located in the vicinity of the target which leads to higher DOA estimation accuracy.
On the other hand, in the high-transmit power region, the proposed IRS self-sensing system achieves a high probability of successful estimation at high SNR, while the MUS scheme suffers some performance loss, since the user may be far away from the target and/or IRS and thus become the performance bottleneck.

\begin{figure}[t]
\centering
\includegraphics[width=80mm]{./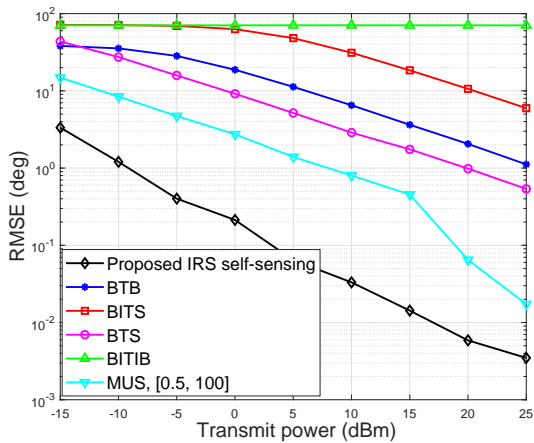}
\caption{RMSE versus transmit power.}
\label{Fig1}
\end{figure}

\begin{figure}[t]
\centering
\includegraphics[width=80mm]{./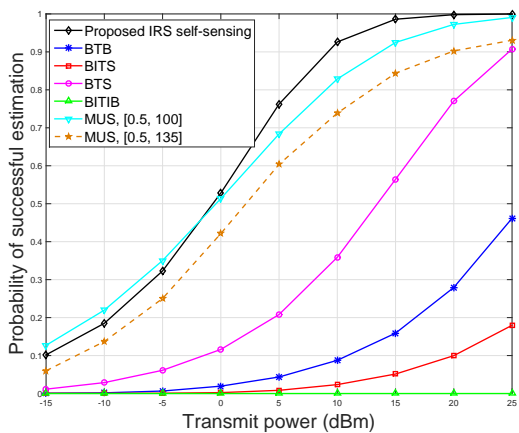}
\caption{Probability of successful estimation versus transmit power.}
\label{Fig2}
\end{figure}

\begin{figure}[t]
\centering
\includegraphics[width=80mm]{./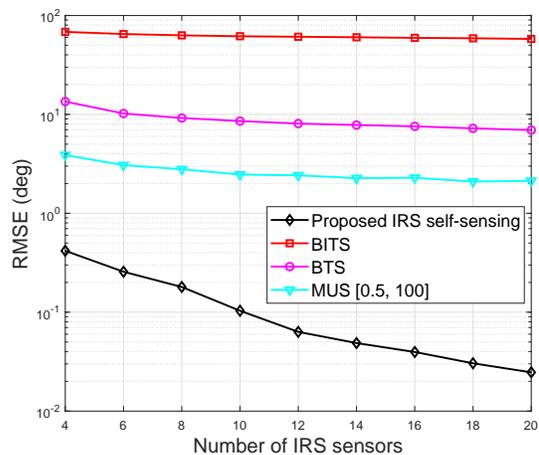}
\caption{RMSE versus number of IRS sensors.}
\label{Fig9}
\end{figure}

\begin{figure}[t]
\centering
\includegraphics[width=80mm]{./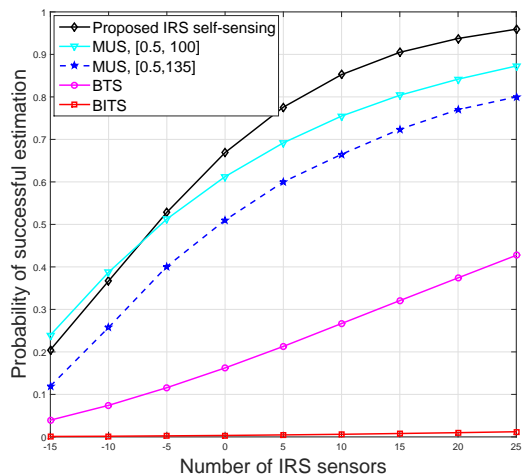}
\caption{Probability of successful estimation versus number of IRS sensors.}
\label{Fig10}
\end{figure}

Next, we investigate the impact of the number of sensors on the performance of the proposed IRS-self-sensing scheme in Fig. \ref{Fig9} and Fig. \ref{Fig10}. It is observed that the estimation performance is critically dependent on  the number of IRS sensors, i.e., $M$. Specifically,
an increasing $M$ leads to a decreasing NMSE and an increasing probability of successful estimation for all schemes. One can observe that
the proposed IRS-self-sensing scheme requires a small number of sensors for achieving a high  DOA estimation accuracy as compared to the BITS, BTS, and MUS schemes that all use IRS sensors to receive signals.
Note that there also exists a threshold on the number of sensors, above which the proposed IRS self-sensing system achieves a higher probability of successful
estimation. This is fundamentally due to the increasing average SNR at the sensors with the increasing number of IRS sensors; thus similar arguments as for the effect of transmit power (see Fig. \ref{Fig2}) apply. Moreover, the performance gap of the proposed scheme against other benchmark schemes is
enlarged as the number of sensors increases.

\begin{figure}[t]
\centering
\includegraphics[width=80mm]{./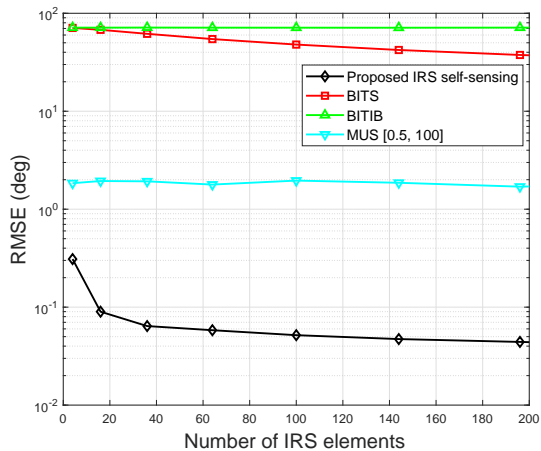}
\caption{RMSE versus number of IRS reflecting elements.}
\label{Fig7}
\end{figure}

\begin{figure}[t]
\centering
\includegraphics[width=80mm]{./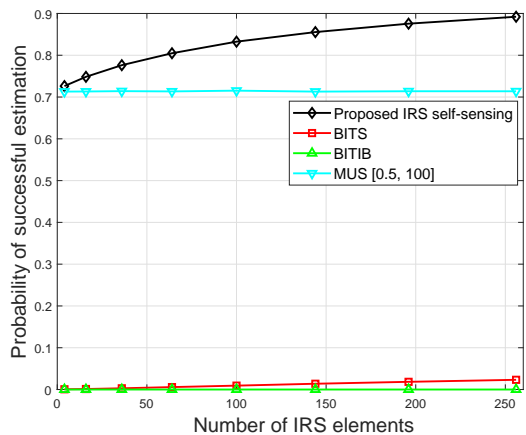}
\caption{Probability of successful estimation versus number of IRS reflecting elements}.
\label{Fig8}
\end{figure}

In Fig. \ref{Fig7} and Fig. \ref{Fig8}, we show the impact of the number of IRS reflecting elements on the sensing performance. It is observed that the proposed scheme obviously outperforms the schemes that sends the signal from BS or user. For the proposed scheme, the NMSE monotonically decreases with the IRS reflecting elements, i.e., $N$ and the probability of successful estimation monotonically increases with $N$. The reason is that as the number
of IRS reflecting elements increases, the IRS passive beamforming
gain continues to increase.
However, the sensing performance of benchmark schemes does not improve too much when $N$ increases, since for those schemes, the IRS passive beamforming gain cannot greatly compensate the path-loss, especially in the case of far-away BS$\to$ IRS link.
It is observed that as $N$ increases, the RMSE (or the probability of successful estimation) of the proposed scheme
first rapidly decreases (increases) and then more slowly when $N$ is sufficiently large.

\begin{figure}[t]
\centering
\includegraphics[width=81mm]{./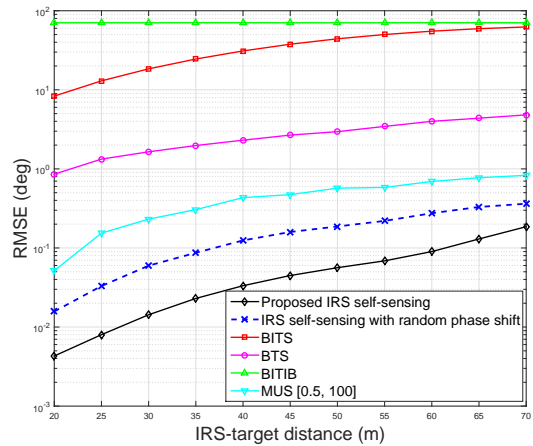}
\caption{RMSE versus distance between IRS/sensors and target.}
\label{Fig5}
\end{figure}

\begin{figure}[t]
\setlength{\abovecaptionskip}{-0.cm}
\setlength{\belowcaptionskip}{0.cm}
  \centering
\includegraphics [width=81mm]{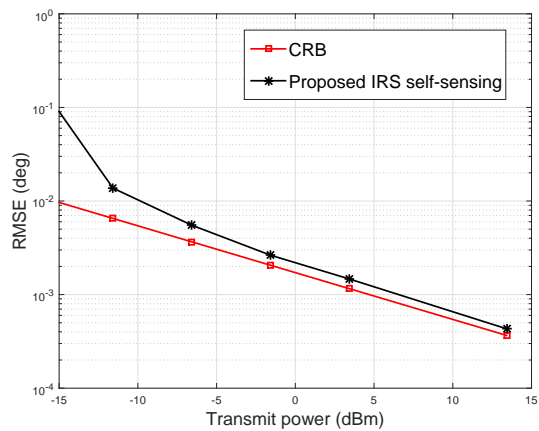}
\caption{CRB and IRS self-sensing performance for DOA estimation.}
\label{CRBfig}
\end{figure}

In Fig.~\ref{Fig5}, we show the NMSE performance of different sensing
schemes versus the IRS/sensors-target distance. It is observed that the proposed scheme achieves a small NMSE even at a long sensing range, thanks to the IRS controller in short distance with the reflecting elements. In contrast, the other benchmark schemes suffer a poor NMSE performance when the target is far from the IRS/sensors. Since the signal traveling distance from the transmitter to the receiver is too long, the average received signal power at IRS sensors is extremely low and the target cannot be sensed at long distances. Moreover, compared with the random-phase scheme that randomly generates IRS reflection over time, our proposed IRS reflection design for target sensing achieves a much smaller RMSE, since it yields an omnidirectional beampattern for target sensing.

%since the IRS controller short controller-IRS distance achieves a higher sensing accuracy.
%We can also observe that when the target number increases, the MUS
%breaks down faster than the proposed scheme.

Last, we compare in Fig.~\ref{CRBfig} the RMSE of the proposed IRS-self-sensing scheme by using the MUSIC algorithm and its CRB in Theorem~\ref{Theorem1}. It is observed that although there is a notable  gap in the low transmit power regime, the achieved NMSE of the proposed scheme gets very close to the obtained CRB when the transmit power is sufficiently large. This demonstrates the effectiveness of the proposed IRS-self-sensing scheme.

\section{Conclusion}
In this paper,  we proposed a new IRS-self-sensing system, where the IRS controller is employed to transmit probing signals,
and dedicated sensors are installed at the IRS for location/angle estimation based on the reflected signals by the target with and without the IRS reflection. The MUSIC algorithm was applied to estimate the DOA of the target in the IRS's vicinity with high accuracy, without the involvement of either the BS or any mobile device. Although the DOA estimation MSE by the MUSIC algorithm is intractable, we optimized the IRS passive reflection matrix for maximizing the average received signals' total power at the IRS sensors, which leads to the minimum MSE.
Besides, we analytically showed that it is beneficial to employ the IRS controller (instead of a helping mobile device at random location) for sending probing signals, and obtained the CRB for the target DOA estimation.

\begin{appendices}
\section{Proof of Theorem~1}\label{App:SimuLoS}
We first introduce an auxiliary variable $\mathbf{c}=[0,\cdots,\frac{1}{\alpha_{\rm{CI}}\eta_r},\cdots,0]^T\in \mathbb{R}^{N\times 1}$. Then, \eqref{Eq:yt3} can be rewritten as
\begin{align}\label{rew_y}
{\bf y}[t]&=\alpha_{\rm r}\alpha_{\rm{CI}}\eta_{\rm r} {\bf b}(\theta) {\bf q}^T(\theta){\boldsymbol{\varphi}}[t]+\alpha_{\rm d}  {\bf b}(\theta)+{\bf z}[t]\nonumber\\
&\overset{(a_1)}{=} \alpha_{\rm{CI}}\eta_{\rm r}{\bf b}(\theta) {\bf q}^T(\theta)\left(\alpha_{\rm r}{\boldsymbol{\varphi}}[t]+\alpha_{\rm d}\mathbf{c} \right)+{\bf z}[t],
\end{align}
where we use the property $\alpha_{\rm{CI}}\eta_{\rm r}{\bf q}^T(\theta){\bf c}=1$ in $(a_1)$. Without loss of generality, by setting
$\xi=\max(\alpha_{\rm r},\alpha_{\rm d})$, \eqref{rew_y} can be approximated as
\begin{align}\label{appro_y}
{\bf y}[t]\approx\xi\alpha_{\rm{CI}}\eta_{\rm r} {\bf b}(\theta) {\bf q}^T(\theta)\left({\boldsymbol{\varphi}}[t]+\mathbf{c} \right)+{\bf z}[t].
\end{align}
Since  ${\bf z}[t]$ is CSCG  distributed, the received signal at the sensors, ${\bf y}[t]$, can be modeled as an independent CSCG vector, i.e., ${\bf y}[t]\sim\mathcal{CN}(\xi \alpha_{\rm{CI}}\eta_{\rm r}{\bf b}(\theta) {\bf q}^T(\theta)\left({\boldsymbol{\varphi}}[t]+\mathbf{c} \right),\sigma^2\mathbf{I})$.
\newcounter{TempEqCnt}
\setcounter{TempEqCnt}{\value{equation}}
\setcounter{equation}{48}
\begin{figure*}[ht]
\begin{align}\label{CRB1}
\mathrm{CRB}(\theta )=\frac{1}{2\frac{T|\xi\alpha_{\rm{CI}}\eta_{\rm r}|^2}{\sigma^2}\left({\bf q}^H(\theta)\mathbf{R}^T_{\rm c}{\bf q}(\theta)\left\|\dot{\bf b}(\theta)\right\|_2^2+M\dot{\bf q}^H(\theta)\mathbf{R}^T_{\rm c}\dot{\bf q}(\theta)-\frac{M|{\mathbf{q}}^H(\theta)\mathbf{R}^T_{\rm c}\dot{\mathbf{q}}(\theta)|^2}{{\bf q}^H(\theta)\mathbf{R}^T_{\rm c}{\bf q}(\theta)}\right)}.
\end{align}
\hrulefill
\end{figure*}
\setcounter{equation}{\value{TempEqCnt}}

\newcounter{TempEqCnt1}
\setcounter{TempEqCnt1}{\value{equation}}
\setcounter{equation}{52}
\begin{figure*}[ht]
\begin{align}\label{g3}
&{\bf q}^H(\theta)\mathbf{R}^T_{\rm c}{\bf q}(\theta)\left\|\dot{\bf b}(\theta)\right\|_2^2 = 4\sum_{m=1,2,\cdots,\frac{M-1}{2}}\pi^2\frac{d_{\rm s}^2}{\lambda^2}m^2\cos^2(\theta)\left(N+\frac{1}{\alpha_{\rm CI}\alpha_{\rm CI}^{\dagger}\eta_{\rm r}\eta_{\rm r}^{\dagger}}\right.\nn \\
&\left.+\underbrace{\Re\left\{\frac{2}{T\alpha_{\rm CI}^{\dagger}\eta_{\rm r}^{\dagger}}\sum_{t=1}^{T}\left({{\boldsymbol{\varphi}}[t]_1e^{j(N-1)\pi\frac{{\rm d_I}}{\lambda}\bar{\varsigma}(\theta)}}+\cdots+{{\boldsymbol{\varphi}}[t]_Ne^{-j(N-1)\pi\frac{{\rm d_I}}{\lambda}\bar{\varsigma}(\theta)}}\right)\right\}}_{p(\theta)}\right)\nn\\
&=\frac{M^3-M}{6}\pi^2\frac{d_{\rm s}^2}{\lambda^2}\cos^2(\theta)\left(N+\frac{1}{\alpha_{\rm CI}\alpha_{\rm CI}^{\dagger}\eta_{\rm r}\eta_{\rm r}^{\dagger}}+p(\theta)\right),
\end{align}
\hrulefill
\end{figure*}
\setcounter{equation}{\value{TempEqCnt}}

Let $\boldsymbol{\varpi}=[\theta, \bar{\boldsymbol{\xi}}]^T$ denote the vector of unknown parameters to be estimated, which includes the DOA and the complex amplitudes with $\bar{\boldsymbol{\xi}}=[\Re\{\xi\},\Im\{\xi\}]^T$.
The log-likelihood function for estimating $\boldsymbol{\varpi}$ from  $\mathbf{Y}=({\bf y}[1],\cdots,{\bf y}[T])$ can be derived as
\begin{align}\label{logy}
\log f_{\mathbf{Y}}(\mathbf{Y};\boldsymbol{\varpi})=\bar{f}_1(\mathbf{Y})+\bar{f}_2(\boldsymbol{\varpi})
+\bar{f}_3(\mathbf{Y},\boldsymbol{\varpi}),
\end{align}
where
%$\bar{f}_1(\mathbf{Y})$ given by
%\begin{align}\label{f1y}
$\bar{f}_1(\mathbf{Y})=-MN\log(\pi\sigma^2)-\frac{1}{\sigma^2}\sum_{t=1}^{T}\left \| {\bf y}[t] \right \|_2^2$,
%\end{align}
and $\bar{f}_2(\boldsymbol{\varpi})$ is given by
\begin{align}\label{f2y}
&\bar{f}_2(\boldsymbol{\varpi})=-\frac{1}{\sigma^2}\sum_{t=1}^{T}\left \| \xi\alpha_{\rm{CI}}\eta_{\rm r} {\bf b}(\theta) {\bf q}^T(\theta)\left({\boldsymbol{\varphi}}[t]+\mathbf{c} \right) \right \|_2^2\nonumber\\
&\!=\!-\frac{1}{\sigma^2}\sum_{t=1}^{T}\xi^{\dagger}\xi\alpha_{\rm{CI}}^{\dagger}\alpha_{\rm{CI}}\eta_{\rm r}^{\dagger}\eta_{\rm r}\!\left(\!{\boldsymbol{\varphi}}^H[t]\!+\!\mathbf{c}^H \!\right)\!\mathbf{A}^H(\theta)\mathbf{A}(\theta)\left({\boldsymbol{\varphi}}[t]\!+\!\mathbf{c} \right)\nonumber\\
&=-\frac{T}{\sigma^2}\xi^{\dagger}\xi\alpha_{\rm{CI}}^{\dagger}\alpha_{\rm{CI}}\eta_{\rm r}^{\dagger}\eta_{\rm r}\text{tr}(\mathbf{A}(\theta)\mathbf{R}_c\mathbf{A}^H(\theta)),
\end{align}
where $\mathbf{A}(\theta)= {\bf b}(\theta) {\bf q}^T(\theta)$ and
\begin{align}\label{rc}
\mathbf{R}_c=\frac{1}{T}\sum_{t=1}^{T}\left({\boldsymbol{\varphi}}[t]+\mathbf{c} \right)\left({\boldsymbol{\varphi}}^H[t]+\mathbf{c}^H \right).
\end{align}
Moreover, $\bar{f}_3(\mathbf{Y},\boldsymbol{\varpi})$ is given by
\begin{align}\label{f3y}
&\bar{f}_3(\mathbf{Y},\boldsymbol{\varpi})\!=\!\frac{2}{\sigma^2}\Re\!\left\{\!\xi^{\dagger}\alpha_{\rm{CI}}^{\dagger}\eta_{\rm r}^{\dagger}\sum_{t=1}^{T}\!\left(\!{\boldsymbol{\varphi}}^H[t]+\mathbf{c}^H \!\right)\!{\bf q}^{\dagger}(\theta){\bf b}(\theta)^H{\bf y}[t]\!\right\}\nonumber\\
&\!=\!\frac{2\sqrt{T}}{\sigma^2}\Re\!\left\{\!\xi^{\dagger}\alpha_{\rm{CI}}^{\dagger}\eta_{\rm r}^{\dagger}\sum_{n=1}^{N}\mathbf{A}_{:,n}^H(\theta)\left(\frac{1}{\sqrt{T}}\sum_{t=1}^{T}{\bf y}[t]\!\left(\!{\boldsymbol{\varphi}}^{\dagger}_n[t]\!+\!\mathbf{c}^{\dagger}_n \!\right)\!
\right)\!\right\},
\end{align}
where $\mathbf{c}^{\dagger}_n$ denotes the $n$-th element of vector $\mathbf{c}^{\dagger}$.
It is observed from \eqref{f3y} that
\begin{align}\label{suf_v}
\bar{\bf v}_n=\frac{1}{\sqrt{T}}\sum_{t=1}^{T}{\bf y}[t]\left({\boldsymbol{\varphi}}^{\dagger}_n[t]+\mathbf{c}^{\dagger}_n \right)
, \forall n\in\mathcal{N},
\end{align}
is the sufficient statistic \cite{suff}, which contains all information relevant to $\boldsymbol{\varpi}$ derived from data $\mathbf{Y}$. In order to reduce the dimensionality and computational complexity, instead of using log-likelihood function in \eqref{logy}, we derive the fisher information matrix (FIM) \cite{fish} in the following sufficient statistic matrix:
\begin{align}\label{suf_m}
{{\bf V}}=[\bar{\bf v}_1,\cdots,\bar{\bf v}_N]=\frac{1}{\sqrt{T}}\sum_{t=1}^{T}{\bf y}[t]\left({\boldsymbol{\varphi}}^H[t]+\mathbf{c}^H \right).
\end{align}
To make ${\{\bar{\bf v}}_n\}_{n=1}^{N}$ statistically independent, we
decompose the matrix from \eqref{rc} by using singular value decomposition (SVD) as $\mathbf{R}_c=\mathbf{U}\boldsymbol{\Gamma }\mathbf{U}$, where $\mathbf{U}$ and $\boldsymbol{\Gamma}$ are the matrices of eigenvectors
and eigenvalues of $\mathbf{R}_c$, respectively. As such, the vector $\boldsymbol{\chi}={\boldsymbol{\varphi}}[t]+\mathbf{c} $ can be written as a linear transformation of independent signals, which is given by
\begin{align}\label{su}
\widetilde{\boldsymbol{\chi}}=\boldsymbol{\Gamma}^{-\frac{1}{2}}\mathbf{U}^H\left({\boldsymbol{\varphi}}[t]+\mathbf{c} \right).
\end{align}
Substituting \eqref{su} into \eqref{suf_m} and vectorizing it, we have
\begin{align}\label{y_sim}
{\bf v}&=\text{vec}\left(\frac{1}{\sqrt{T}}\sum_{t=1}^{T}{\bf y}[t]\left({\boldsymbol{\varphi}}^H[t]+\mathbf{c}^H \right)\mathbf{U}\boldsymbol{\Gamma}^{-\frac{1}{2}}\right)\nonumber\\
&=\xi\alpha_{\rm{CI}}\eta_{\rm r}\sqrt{T}\text{vec}\left({\bf b}(\theta) {\bf q}^T(\theta)\mathbf{U}\boldsymbol{\Gamma}^{\frac{1}{2}}\right)\nn\\
&+\text{vec}\l(\frac{1}{\sqrt{T}}\sum_{t=1}^{T}\mathbf{z}[t]
\left({\boldsymbol{\varphi}}^H[t]+\mathbf{c}^H \right)\mathbf{U}\boldsymbol{\Gamma}^{-\frac{1}{2}}\r)\nonumber\\
&=\xi\alpha_{\rm{CI}}\eta_{\rm r}\underbrace{\sqrt{T}\text{vec}\left({\bf b}(\theta) {\bf q}^T(\theta)\mathbf{U}\boldsymbol{\Gamma}^{\frac{1}{2}}\right)}_{{\rm \mathbf{w}}(\theta)}+\bar{\mathbf{z}},
\end{align}
where $\bar{\mathbf{z}}=\text{vec}(\frac{1}{\sqrt{T}}\sum_{t=1}^{T}\mathbf{z}[t]
\left({\boldsymbol{\varphi}}^H[t]+\mathbf{c}^H \right)\mathbf{U}\boldsymbol{\Gamma}^{-\frac{1}{2}})$ obeys the CSCG  distribution with zero mean and variance $\sigma^2\mathbf{I}$. $\mathbf{w}(\theta)$ denotes the array response at the direction $\theta$, and hence the last equation of \eqref{y_sim} represents the equivalent model to \eqref{appro_y}.

%The FIM \cite{fish} is an intrinsic property of the model $\log f_{\mathbf{Y}}(\mathbf{Y};\boldsymbol{\varpi})$.
Next, the FIM for estimating the parameters $\boldsymbol{\varpi}=[\theta, \bar{\boldsymbol{\xi}}]^T$ from the sufficient statistic \eqref{y_sim} is given by
\begin{align}\label{fish}
\mathbf{F}({\bf v})=\begin{bmatrix}
{f}_{\theta\theta} & \mathbf{f}_{\theta\bar{\boldsymbol{\xi}}}   \\
 \mathbf{f}^T_{\theta\bar{\boldsymbol{\xi}}}&   \mathbf{f}_{\bar{\boldsymbol{\xi}}\bar{\boldsymbol{\xi}}}
\end{bmatrix},
\end{align}
and the CRB for the $\theta$ estimation can be written as
\begin{align}\label{CRB}
{\rm CRB}(\theta)=\left[{f}_{\theta\theta}-\mathbf{f}_{\theta\bar{\boldsymbol{\xi}}}
\mathbf{f}_{\bar{\boldsymbol{\xi}}\bar{\boldsymbol{\xi}}}^{-1}\mathbf{f}^T_{\theta\bar{\boldsymbol{\xi}}}\right]^{-1},
\end{align}
where ${f}_{\theta\theta}\in \mathbb{R}^{1}$, $\mathbf{f}_{\theta\bar{\boldsymbol{\xi}}}^T\in \mathbb{R}^{2}$ and $\mathbf{f}_{\bar{\boldsymbol{\xi}}\bar{\boldsymbol{\xi}}}\in \mathbb{R}^{2\times 2}$.
Since the array origin defined in \eqref{array} is at the array centroid, we have
\begin{align}\label{symme}
&{\bf q}^H(\theta)\dot{{\bf q}}(\theta)=0,~~~{\bf b}^H(\theta)\dot{{\bf b}}(\theta)=0,\\
& \dot{\bf q}^H(\theta){{\bf q}(\theta)}=0,~~~
\dot{\bf b}^H(\theta){{\bf b}(\theta)}=0,~\forall \theta.
\end{align}
Leveraging this orthogonality property,  ${f}_{\theta\theta}$, $\mathbf{f}_{\theta\bar{\boldsymbol{\xi}}}$ and $\mathbf{f}_{\bar{\boldsymbol{\xi}}\bar{\boldsymbol{\xi}}}$ can be calculated as
\begin{align}\label{cc}
{f}_{\theta\theta}&=\frac{2}{\sigma^2}\Re\left\{\xi^{\dagger}\alpha_{\rm{CI}}^{\dagger}\eta_{\rm r}^{\dagger}\dot{{\bf w}}^H(\theta)\xi\alpha_{\rm{CI}}\eta_{\rm r}\dot{{\bf w}}(\theta)\right\}\nonumber\\
&=\frac{2T}{\sigma^2}\Re\left\{\xi\xi^{\dagger}\alpha_{\rm{CI}}\alpha_{\rm{CI}}^{\dagger}\eta_{\rm r}\eta_{\rm r}^{\dagger}\text{tr}\left(\dot{\mathbf{A}}(\theta)\mathbf{R}_{\rm c}\dot{\mathbf{A}}^H(\theta)\right)\right\}\nonumber\\
&=\frac{2T|\xi\alpha_{\rm{CI}}\eta_{\rm r}|^2}{\sigma^2}\text{tr}\left(\left[\dot{\bf b}(\theta) {\bf q}^T(\theta)+{\bf b}(\theta) \dot{\bf q}^T(\theta)\right]\mathbf{R}_{\rm c}\right.\nn\\
&\left.\times\left[\dot{\bf b}(\theta)^{\dagger} {\bf q}^H(\theta)+{\bf b}(\theta)^{\dagger} \dot{\bf q}^H(\theta)\right]\right)\!=\!\frac{2T|\xi\alpha_{\rm{CI}}\eta_{\rm r}|^2}{\sigma^2}\!\nonumber\\
&\times\left(\!{\bf q}^H(\theta)\mathbf{R}^T_{\rm c}{\bf q}(\theta)\left\|\dot{\bf b}(\theta)\right\|_2^2\!+\!M\dot{\bf q}^T(\theta)\mathbf{R}_{\rm c}\dot{\bf q}^{\dagger}(\theta)\!\right),
\end{align}
\begin{align}\label{cx}
\mathbf{f}_{\theta\bar{\boldsymbol{\xi}}}&=\frac{2}{\sigma^2}\Re\left\{\xi^{\dagger}\alpha_{\rm{CI}}^{\dagger}\eta_{\rm r}^{\dagger}\dot{{\bf w}}^H(\theta)\left((1,j)\otimes\alpha_{\rm{CI}}\eta_{\rm r}{{\bf w}}(\theta)\right)\right\}\nonumber\\
&=\frac{2T}{\sigma^2}\Re\left\{\xi^{\dagger}\alpha_{\rm{CI}}^{\dagger}\eta_{\rm r}^{\dagger}\alpha_{\rm{CI}}\eta_{\rm r}\text{tr}\left({\mathbf{A}}(\theta)\mathbf{R}_{\rm c}\dot{\mathbf{A}}^H(\theta)\right)(1,j)\right\}\nonumber\\
&=\frac{2TM\alpha_{\rm{CI}}^{\dagger}\eta_{\rm r}^{\dagger}\alpha_{\rm{CI}}\eta_{\rm r}}{\sigma^2}\Re\left\{\xi^{\dagger}{\mathbf{q}}^H(\theta)\mathbf{R}^T_{\rm c}\dot{\mathbf{q}}(\theta)(1,j)\right\},
\end{align}
\begin{align}\label{xx}
\mathbf{f}_{\bar{\boldsymbol{\xi}}\bar{\boldsymbol{\xi}}}&=\frac{2T}{\sigma^2}\Re\left\{((1,j)\otimes\alpha_{\rm{CI}}\eta_{\rm r}{{\bf w}}(\theta))^H((1,j)\otimes\alpha_{\rm{CI}}\eta_{\rm r}{{\bf w}}(\theta))\right\}\nonumber\\
&=\frac{2T}{\sigma^2}\Re\left\{\alpha_{\rm{CI}}^{\dagger}\eta_{\rm r}^{\dagger}\alpha_{\rm{CI}}\eta_{\rm r}(1,j)^H(1,j)\text{tr}\left({\mathbf{A}}(\theta)\mathbf{R}_{\rm c}{\mathbf{A}}^H(\theta)\right)\right\}\nonumber\\
&=\frac{2T}{\sigma^2}M\alpha_{\rm{CI}}^{\dagger}\eta_{\rm r}^{\dagger}\alpha_{\rm{CI}}\eta_{\rm r}{\bf q}^H(\theta)\mathbf{R}^T_{\rm c}{\bf q}(\theta)\mathbf{I}.
\end{align}
Substituting \eqref{cc}--\eqref{xx} into \eqref{CRB} yields \eqref{CRB1} at the top of previous page.

For $\mathbf{R}_c$ in \eqref{rc}, we first perform the following transformation
\setcounter{equation}{49}
\begin{align}\label{rct}
\mathbf{R}_c\!=\!\frac{1}{T}\sum_{t=1}^{T}{\boldsymbol{\varphi}}[t]{\boldsymbol{\varphi}}^H[t]\!+\!\mathbf{c} \mathbf{c}^H\!+\!\frac{1}{T}\sum_{t=1}^{T}{\boldsymbol{\varphi}}[t]\mathbf{c}^H+\frac{1}{T}\sum_{t=1}^{T}\mathbf{c}{\boldsymbol{\varphi}}^H[t].
\end{align}
Then, inserting \eqref{rct} into \eqref{CRB1} and using the property derived in Section III. B that $\mathbf{R}_{\boldsymbol{\varphi}}=\frac{1}{T}\sum_{t=1}^{T}{\boldsymbol{\varphi}}[t]{\boldsymbol{\varphi}}^H[t]=\mathbf{I}$, we have
\begin{align}\label{g1}
\frac{M|{\mathbf{q}}^H(\theta)\mathbf{R}^T_{\rm c}\dot{\mathbf{q}}(\theta)|^2}{{\bf q}^H(\theta)\mathbf{R}^T_{\rm c}{\bf q}(\theta)}=0.
\end{align}
Since $\dot{\bf q}^H(\theta)(\mathbf{c}\mathbf{c}^H)^T\dot{\bf q}(\theta)=\dot{\bf q}^H(\theta)(\frac{1}{T}\sum_{t=1}^{T}{\boldsymbol{\varphi}}[t]\mathbf{c}^H)^T\dot{\bf q}(\theta)=\dot{\bf q}^H(\theta)(\frac{1}{T}\sum_{t=1}^{T}\mathbf{c}{\boldsymbol{\varphi}}^H[t])^T\dot{\bf q}(\theta)=0$, we obtain that
\begin{align}\label{g2}
&M\dot{\bf q}^H(\theta)\mathbf{R}^T_{\rm c}\dot{\bf q}(\theta)=M\dot{\bf q}^H(\theta)\mathbf{R}_{\boldsymbol{\varphi}}^T\dot{\bf q}(\theta)\nn\\
&\!=\!4M\!\!\!\sum_{n=1,\cdots,\frac{N-1}{2}}\!\!\!n^2\pi^2\frac{d_{\rm I}^2}{\lambda^2}{\varsigma}^2(\theta)\!=\!\frac{N^3-N}{6}\pi^2\frac{d_{\rm I}^2}{\lambda^2}{\varsigma}^2(\theta)M,
\end{align}
where ${\varsigma}(\theta)=\cos(\theta)\sin(\theta_{\rm IT,v})+\sin(\theta_{\rm CI, h})\sin(\theta_{\rm CI, v})$. After some calculation, we know that ${\bf q}^H(\theta)\mathbf{R}_{\boldsymbol{\varphi}}^T{\bf q}(\theta)=N$ and ${\bf q}^H(\theta)(\mathbf{c}\mathbf{c}^H)^T{\bf q}(\theta)\!=\!\frac{1}{\alpha_{\rm{CI}}^{\dagger}\eta_{\rm r}^{\dagger}\alpha_{\rm{CI}}\eta_{\rm r}}$. As such, we finally can state that \eqref{g3} at top of previous page,
where $\bar{\varsigma}(\theta)=\sin(\theta)\sin(\theta_{\rm IT,v})+\sin(\theta_{\rm CI, h})\sin(\theta_{\rm CI, v})$.
Substituting \eqref{g1}--\eqref{g3} into \eqref{CRB1} leads to the desired result in \eqref{theorem1}. The proof of Theorem 1 is thus completed.
\end{appendices}

% we have
%\begin{align}\label{CRB2}
%\mathrm{CRB}(\theta )=\frac{1}{2\sum_{m=1,2,\cdots,\frac{M-1}{2}}\pi^2\frac{d_{\rm s}^2}{\lambda^2}m^2\cos^2(\theta)\left(N+\frac{1}{\alpha_{\rm CI}\alpha_{\rm CI}^{\dagger}\eta_{\rm r}\eta_{\rm r}^{\dagger}}\right)+2M\sum_{n=1,\cdots,\frac{N-1}{2}}n^2\pi^2\frac{d_{\rm I^2}}{\lambda^2}\sin^2(\theta_{\rm CI,v})\cos(\theta)^2}\nonumber\\
%\end{align}

\end{document}